\newtheorem{theorem}{Theorem}
\newtheorem{definition}[theorem]{Definition}
\newenvironment{proof}[1][Proof]{\noindent\textbf{#1.} }{\ \rule{0.5em}{0.5em}}
\newcommand*{\degin}{\mathrm{deg}^-}
\newcommand*{\degout}{\mathrm{deg}^+}
\date{\today}
\author{Xiangying Chen\\Freie Universit\"{a}t Berlin\\ \texttt{chex48@zedat.fu-berlin.de}}
\title{Digraph Polynomials for Counting Cycles and Paths}
\begin{document}
	\maketitle
	\begin{abstract}
		Many polynomial invariants are defined on graphs for encoding the combinatorial information and researching them algebraically. In this paper, we introduce the cycle polynomial and the path polynomial of directed graphs for counting cycles and paths, respectively. They satisfy recurrence relations with respect to elementary edge or vertex operations. They are related to other polynomials and can also be generalized to the bivariate cycle polynomial, the bivariate path polynomial and the trivariate cycle-path polynomial. And a most general digraph polynomial satisfying such a linear recurrence relation is recursively defined and shown to be co-reducible to the trivariate cycle-path polynomial. We also give an explicit expression of this polynomial. 
	\end{abstract}
	
\section{Introduction}
	Many graph polynomials have been introduced and well studied over the years, they are shown to be effective on encoding, classifying and researching graph invariants as polynomials can be easily manipulated algebraically.  However digraph polynomials are presently less researched. A greater part of graph polynomials are generating functions for substructures in graphs. Most of them satisfy a linear recurrence relation with respect to elementary edge (e.g. \cite{averbouch2008most}) or vertex (e.g. \cite{tittmann2011enumeration}) operations. The relations between graph invariants can be researched by finding relations between graph polynomials. 
	In this present paper we define and research polynomial invariants for digraphs counting cycles and paths and research the class of digraph polynomials satisfying some linear recurrence relation. \\  
	In \cite{chung1995cover}, Chung and Graham introduced a bivariate digraph polynomial called the \emph{cover polynomial} which satisfies a Tutte-like deletion-contraction recurrence relation. It is one of the well-researched digraph polynomials. The research on digraph polynomials counting paths and cycles is motivated by the cover polynomial. It is defined recursively as
	\[
	C(D;x,y)=
	\begin{cases}
	C(D_{-e};x,y)+yC(D_{/e};x,y) & \textrm{if $e$ is a loop,}\\
	C(D_{-e};x,y)+C(D_{/e};x,y) & \textrm{if $e$ is not a loop,}\\
	\end{cases}
	\]
	and $C(E_n;x,y)=x^{\underline{n}}$ for arc-less digraph $E_n$.
	The combinatorial interpretation of $C(D;x,y)$ is 
	\[
	C(D;x,y)=\sum_{i,j}c_{i,j}(D)x^{\underline{i}}y^j,
	\]
	where $c_{i,j}(D)$ denotes the number of ways of disjointly covering all the vertices of $D$ with $i$ directed paths and $j$ directed cycles. 
	(Notice that isolated vertices are regarded as directed paths of length 0 by the cover polynomial and the following geometric cover polynomial. They will not be considered as directed paths in the polynomials defined in this paper.)\\
	The cover polynomial has been introduced as a digraph analogue of the Tutte polynomial. It is also a generalization of the rook polynomial. For the counting of cycle-path covers of a digraph, the ``normal" power can be used instead of the falling factorial. The \emph{geometric cover polynomial} introduced in \cite{d2000cycle} is the ordinary generating function for $c_{i,j}(D)$
	\[
	\widetilde{C}(D;x,y)=\sum_{i,j}c_{i,j}(D)x^iy^j.
	\]
	It satisfies the same recurrence relation as the cover polynomial, but the initial condition is $\widetilde{C}(E_n;x,y)=x^n$. In \cite{chung2016matrix}, these polynomials are generalized into matrix cover polynomial (for matrices, that is, multidigraphs or weighted digraphs) etc.\\
	This paper is structured as follows.
	In Section~\ref{sec31}, several digraph polynomials counting directed cycles and paths are introduced. These digraph polynomials satisfy arc deletion-contraction-extraction recurrence relations like the edge elimination polynomial \cite{averbouch2008most,averbouch2010extension} and vertex deletion-contraction recurrence relations. We give the relationships to their undirected versions and among them. 
	In Section~\ref{sec33}, we generalize the digraph polynomials counting cycles and paths and the geometric cover polynomial to the trivariate cycle-path polynomial. In Section~\ref{sec34}, applying the ideas of \cite{averbouch2008most}, the arc elimination polynomial is introduced, which is the most universal digraph polynomial satisfying linear recurrence relation with respect to deletion, contraction and extraction of arcs. We show that the arc elimination polynomial is co-reducible to the trivariate cycle-path polynomial. An explicit form of the arc elimination polynomial is given.
\section{The Cycle Polynomial and the Path Polynomial of Digraphs}\label{sec31}
In this paper, multidigraphs with loops are considered unless otherwise stated. The following arc operation for (multi-)digraphs will be used:
\begin{itemize}\setlength{\itemsep}{0pt}
	\item \emph{Arc deletion}. The graph obtained from $D$ by removing the arc $e$ is denoted by $D_{-e}$.
	\item \emph{Arc contraction}. If $e=(u,v)\in E$, $u\neq v$, $D_{/e}$ is defined as the digraph obtained from $D$ by unifying the two vertices $u$ and $v$ into a new vertex $w$, and removing exactly the arcs of the form $(u,x)$ or $(y,v)$ from $E$. If $e=(u,u)$, the vertex $u$ is also removed.
	\item \emph{Arc extraction}. For $e=(u,v)$, $D_{\dagger e}$ is defined as the digraph obtained from $D$ by removing $u$ and $v$ and their (or its if $u=v$) incident arcs.
	\item \emph{Arc addition}. The graph obtained from $D$ by adding the arc $(u,v),u,v\in V$ is denoted by $D_{+(u,v)}$.
\end{itemize}

A digraph $D$ on the vertex set $V(D)=\{1,\ldots,n\}$ can be represented as a matrix $A=(a_{ij})\in \mathbb{N}^{n\times n}$, where $a_{ij}$ is the number of arcs from vertex $i$ to vertex $j$. That is, $A$ is the adjacency matrix of $D$.\\
The digraph operations can be expressed as the matrix operations. Let $D=(V,E)$ be a digraph and $A(D)$ be the adjacency matrix of $D$. Without loss of generality, let $V=\{1,\ldots,n\}$. Then for $e=(i,j)\in E$:
\begin{itemize}\setlength{\itemsep}{0pt}
	\item $A(D_{-e})$ can be obtained from $A(D)$ by subtracting 1 from $a_{ij}$,
	\item $A(D_{/e})$ can be obtained from $A(D)$ by first exchanging row $i$ and row $j$ then deleting row $j$ and column $j$,
	\item $A(D_{\dagger e})$ can be obtained from $A(D)$ by deleting row $i$, row $j$, column $i$ and column $j$, and
	\item $A(D_{+(i,j)})$ can be obtained from $A(D)$ by adding 1 to $a_{ij}$.
\end{itemize}

\begin{figure}
	\includegraphics[width=350px]{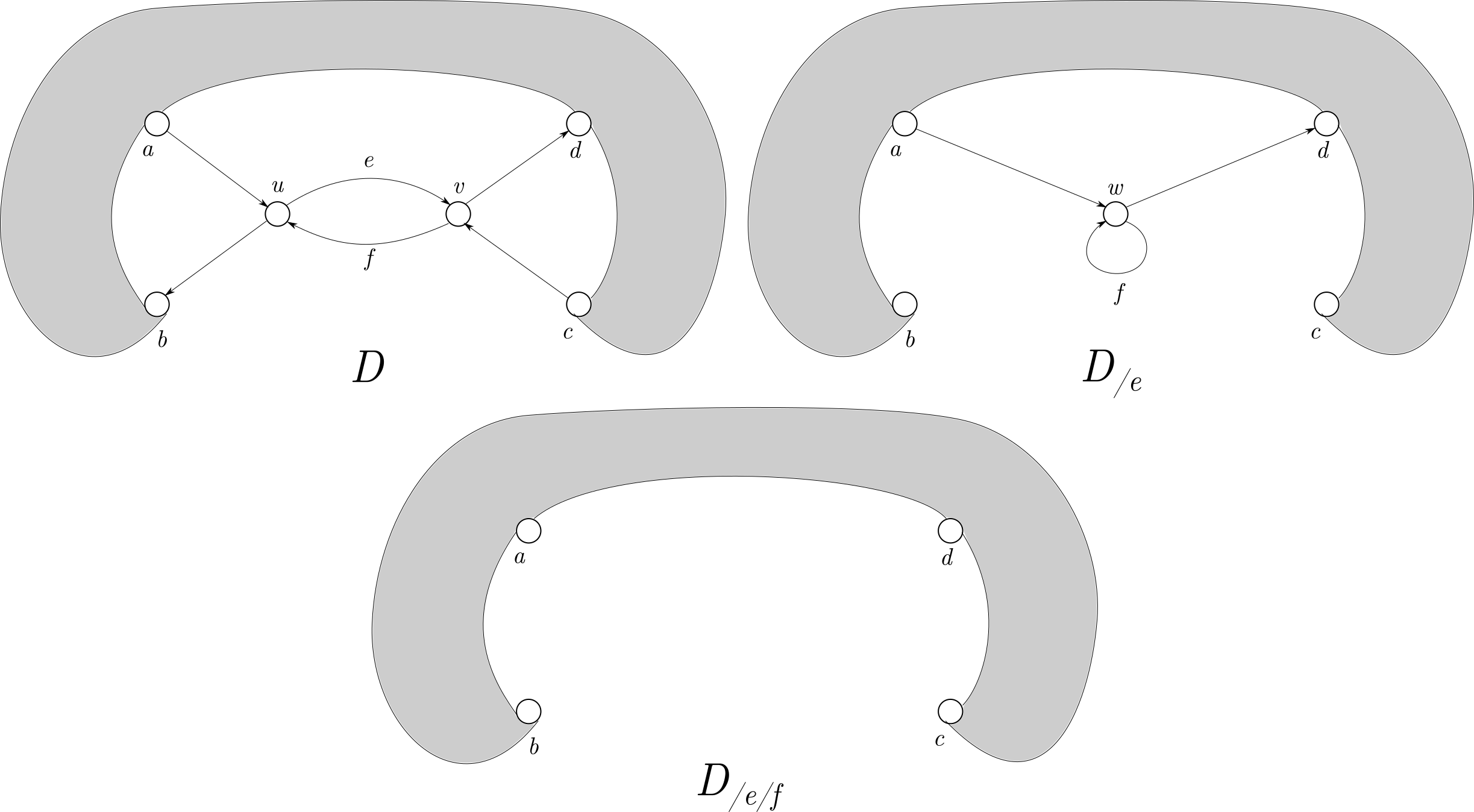}
	\caption{Arc contraction on a digraph}	
\end{figure}
Let $D=(V,E)$ be a digraph where multiple arcs and loops are allowed. The \emph{cycle polynomial} $\sigma(D)=\sigma(D;x)$ of the digraph $D$ is defined as 
\[
\sigma(D)=\sigma(D;x)=\sum_{k=1}^{|V|}c_k(D)x^k,
\]
where $c_k(D)$ denotes the number of directed cycles of length $k$ in $D$. Similarly, the \emph{path polynomial} of $D$ is defined as
\[
\pi(D)=\pi(D;x)=\sum_{k=1}^{|V|-1}p_k(D)x^k,
\]
where $p_k(D)$ denotes the number of directed paths of length $k$ in $D$.\\
The cycle polynomial and the path polynomial of digraphs satisfy respectively the following recurrence relations:
\begin{theorem}
	If $D=(V,E)$ is a digraph and $e\in E$ is an arc of $D$, then
	\[
	\sigma(D)=
	\begin{cases}
	\sigma(D_{-e})+x & \textrm{if $e$ is a loop,}\\
	\sigma(D_{-e})+x\sigma(D_{/e})-x\sigma(D_{\dagger e}) & \textrm{if $e$ is not a loop,}\\ & \textrm{and there are no loops on $u$ or $v$.}
	\end{cases}
	\]
\end{theorem}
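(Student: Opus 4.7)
The plan is to split on whether $e$ is a loop and, in the non-loop case, to partition the directed cycles of $D$ according to whether they contain $e$. If $e$ is a loop at some vertex $u$, then $e$ itself is a cycle of length one, while every other cycle of $D$ uses only arcs other than $e$; hence the remaining cycles are precisely the cycles of $D_{-e}$, yielding $\sigma(D)=\sigma(D_{-e})+x$.

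Now suppose $e=(u,v)$ is not a loop and neither $u$ nor $v$ carries a loop. Cycles of $D$ avoiding $e$ are counted by $\sigma(D_{-e})$. For cycles of $D$ that contain $e$, I would set up a bijection with the cycles of $D_{/e}$ passing through the merged vertex $w$. A cycle $u\to v\to z_1\to\dots\to z_{k-1}\to u$ of $D$ corresponds to the cycle $w\to z_1\to\dots\to z_{k-1}\to w$ of $D_{/e}$, where the arc $(v,z_1)$ becomes $(w,z_1)$, the arc $(z_{k-1},u)$ becomes $(z_{k-1},w)$, and the interior arcs $(z_i,z_{i+1})$ are unchanged. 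This map is a bijection because contraction removes exactly the arcs out of $u$ and into $v$, so no cycle of $D$ avoiding $e$ descends to a cycle of $D_{/e}$ through $w$, and conversely every such cycle of $D_{/e}$ lifts uniquely to a cycle of $D$ by prepending $e$. The length decreases by one under the bijection, which accounts for the factor $x$.

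Finally, I would decompose the cycles of $D_{/e}$ into those through $w$ and those avoiding $w$; the latter are exactly the cycles of $(D_{/e})-w$, which coincides with $D_{\dagger e}$ since deleting $w$ from $D_{/e}$ removes the same arcs as deleting both $u$ and $v$ together with their incident arcs from $D$. Hence cycles through $w$ in $D_{/e}$ are counted by $\sigma(D_{/e})-\sigma(D_{\dagger e})$, and combining yields the claimed recurrence. The main obstacle is verifying the bijection carefully, and the loop-free hypothesis on $u,v$ is used precisely here: it ensures that the arcs incident to $w$ in $D_{/e}$ arise only from arcs $(y,u)$ and $(v,x)$ of $D$, so that every cycle through $w$, including any length-one cycle produced by an arc $(v,u)$, corresponds unambiguously to a cycle of $D$ using $e$.
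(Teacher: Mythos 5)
Your proof is correct and takes essentially the same route as the paper's: cycles avoiding $e$ are counted by $\sigma(D_{-e})$, while the cycles of $D_{/e}$ decompose into the cycles of $D$ through $e$ (with length shifted down by one) and the cycles of $D_{\dagger e}$, so that $x\left[\sigma(D_{/e})-\sigma(D_{\dagger e})\right]$ counts the cycles through $e$; you merely spell out the bijection that the paper leaves implicit. One small remark: under the paper's contraction convention a loop at $u$ is itself an arc of the form $(u,x)$ and a loop at $v$ one of the form $(y,v)$, hence both are deleted by contraction, so the loop-free hypothesis is not actually load-bearing at the point where you invoke it --- but since it is part of the theorem's statement, assuming it there is harmless.
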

\begin{proof}
	If $e$ is a loop, it is counted by $x$ and other cycles are counted by $\sigma(D_{-e})$. If $e$ is not a loop and there are no loops on $u$ or $v$, $\sigma(D_{-e})$ counts exactly directed cycles in $D$ without $e$. $D_{/e}$ contains exactly all cycles of $D$ containing $e$ with lengths decreased by 1, and all cycles of $D_{\dag e}$. Hence $x[\sigma(D_{/e})-\sigma(D_{\dag e})]$ counts exactly directed cycles of $D$ containing $e$.
\end{proof}

The recurrence for the path polynomial is similar. If $e$ is a loop it does not belong to any directed path and so can be deleted, but if $e$ is not a loop, $x$ must be added in order to count the directed path $e$. Then we have the following recurrence.
\begin{theorem}
	If $D=(V,E)$ is a digraph and $e\in E$ is an arc of $D$, then
	\[
	\pi(D)=
	\begin{cases}
	\pi(D_{-e}) & \textrm{if $e$ is a loop,}\\
	\pi(D_{-e})+x\pi(D_{/e})-x\pi(D_{\dagger e})+x & \textrm{if $e$ is not a loop.}
	\end{cases}
	\]
\end{theorem}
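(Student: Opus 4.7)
My plan is to mimic the proof of the preceding cycle polynomial recurrence, tracking two ways paths differ from cycles: loops never occur inside a directed path, and the arc $e$ itself is a directed path of length one that must be counted separately.

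For the loop case the argument is immediate: since a directed path visits distinct vertices, a loop cannot be one of its arcs, so $p_k(D)=p_k(D_{-e})$ for every $k$ and hence $\pi(D)=\pi(D_{-e})$.

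For the non-loop case with $e=(u,v)$, I would partition the directed paths of $D$ according to whether they traverse $e$. Paths avoiding $e$ are exactly the paths of $D_{-e}$, contributing $\pi(D_{-e})$. For a path of length $k\ge 2$ containing $e$, contracting $e$ identifies the consecutive vertices $u$ and $v$ into $w$ (possibly at an endpoint) and yields a path of length $k-1$ in $D_{/e}$ passing through $w$. Conversely, any path of $D_{/e}$ through $w$ lifts to a path of $D$ using $e$ by replacing $w$ with the directed edge $u\to v$; this is well-defined because the incoming arcs of $w$ in $D_{/e}$ are by construction the lifts of the incoming arcs of $u$ in $D$, and the outgoing arcs of $w$ are lifts of the outgoing arcs of $v$. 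Simplicity of the lifted vertex sequence uses that $u,v$ do not appear elsewhere in $D_{/e}$, together with $u\ne v$. This yields a length-shift-by-one bijection between paths through $w$ in $D_{/e}$ and paths of length at least $2$ in $D$ using $e$; the single path of $D$ containing $e$ not in the image is the length-$1$ path $e$ itself, contributing the extra summand $x$.

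To complete the recurrence I would observe that paths of $D_{/e}$ avoiding $w$ are precisely the paths of $D_{\dagger e}$, because $D_{\dagger e}$ is obtained from $D_{/e}$ by deleting $w$ and its incident arcs. Hence the generating function for paths of $D_{/e}$ through $w$ is $\pi(D_{/e})-\pi(D_{\dagger e})$, and summing contributions gives the claimed formula. The main care-point is making the expansion $w\mapsto u\to v$ rigorous and verifying the endpoint cases of the bijection; the genuinely new ingredient compared with the cycle proof is the constant term $x$, which is invisible for cycles because a length-$1$ cycle would be a loop, ruled out there by hypothesis.
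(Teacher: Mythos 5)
Your proof is correct and takes essentially the same approach as the paper, which treats this theorem as a direct adaptation of the cycle-polynomial argument: $\pi(D_{-e})$ counts paths avoiding $e$, the contraction correspondence gives $x\left[\pi(D_{/e})-\pi(D_{\dagger e})\right]$ for paths through $e$ of length at least $2$, and the extra $+x$ counts the length-one path $e$ itself, exactly as the paper notes. Your explicit treatment of the endpoint cases of the lift $w\mapsto u\to v$ and of the loop case merely fills in details the paper leaves implicit.
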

We can also transform a digraph into several digraphs in order to ensure that there is at most one arc between each pair of vertices. 
\begin{theorem}
	Let $D=(V,E)$ be a digraph and $u,v\in V$. Suppose that the arc $(u,v)$ has the multiplicity $n$ and the arc $(v,u)$ has the multiplicity $m$ in $E$. Let $D_3$ be the digraph obtained from $D$ by deleting all of the $n$ arcs $(u,v)$ and the $m$ arcs $(v,u)$, and let $D_1=D_{3+(u,v)}, D_2=D_{3+(v,u)}$, then
	\[
	\sigma(D)=n\sigma(D_1)+m\sigma(D_2)-(n+m-1)\sigma(D_3)+nmx^2,
	\]
	and
	\[
	\pi(D)=n\pi(D_1)+m\pi(D_2)-(n+m-1)\pi(D_3).
	\]
\end{theorem}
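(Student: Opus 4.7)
My plan is to prove both identities by partitioning the directed cycles (resp.\ paths) of $D$ according to how they interact with the arcs between $u$ and $v$, and then expressing each class in terms of the polynomials of $D_1$, $D_2$, $D_3$.

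\medskip
\noindent\textbf{Step 1: classify cycles.} Since a directed cycle visits each vertex at most once, a cycle in $D$ uses at most one arc $(u,v)$ and at most one arc $(v,u)$. So every cycle falls into exactly one of the following four types: (A) uses neither a $(u,v)$- nor a $(v,u)$-arc; (B) uses exactly one $(u,v)$-arc and no $(v,u)$-arc; (C) uses exactly one $(v,u)$-arc and no $(u,v)$-arc; (E) uses one of each — which forces it to be a 2-cycle on $\{u,v\}$. Let $A_\sigma,B_\sigma,C_\sigma,E_\sigma$ denote the polynomial contributions (summing $x^{\text{length}}$ over each class), so $\sigma(D)=A_\sigma+B_\sigma+C_\sigma+E_\sigma$.

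\medskip
\noindent\textbf{Step 2: match each class to $D_1,D_2,D_3$.} The type-(A) cycles are exactly the cycles of $D_3$, hence $A_\sigma=\sigma(D_3)$. In $D_1$, the cycles are of type (A) or cycles using the unique $(u,v)$-arc; write the latter contribution as $\widetilde B_\sigma$, so $\sigma(D_1)=\sigma(D_3)+\widetilde B_\sigma$. Because each type-(B) cycle in $D$ picks one of the $n$ parallel $(u,v)$-arcs, we have $B_\sigma=n\widetilde B_\sigma$; symmetrically $\sigma(D_2)=\sigma(D_3)+\widetilde C_\sigma$ and $C_\sigma=m\widetilde C_\sigma$. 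Finally, the type-(E) 2-cycles are obtained by choosing a $(u,v)$-arc ($n$ ways) and a $(v,u)$-arc ($m$ ways), giving $E_\sigma=nm\,x^2$. Substituting into $\sigma(D)=A_\sigma+n\widetilde B_\sigma+m\widetilde C_\sigma+nmx^2$ and using $\widetilde B_\sigma=\sigma(D_1)-\sigma(D_3)$, $\widetilde C_\sigma=\sigma(D_2)-\sigma(D_3)$ yields the claimed identity; the coefficient of $\sigma(D_3)$ correctly collapses to $1-(n+m)=-(n+m-1)$.

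\medskip
\noindent\textbf{Step 3: adapt to paths.} For paths the argument is identical, with one simplification: a directed path visits distinct vertices, so it cannot use \emph{both} a $(u,v)$- and a $(v,u)$-arc (that would revisit $u$ or $v$). Hence the analogue of type (E) is empty, no $nmx^2$ term appears, and the same bookkeeping gives the path identity.

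\medskip
\noindent\textbf{Main obstacle.} The only delicate point is making the ``multiplicity accounting'' between $D$ and $D_i$ rigorous: a cycle/path in $D$ of type (B) must be identified with a pair (cycle in $D_1$ through the special arc, choice of one of $n$ parallel arcs), and analogously for (C) and (E). Once this correspondence is stated cleanly, the rest is algebra, and the minus sign in $-(n+m-1)$ appears naturally because $\sigma(D_3)$ (and $\pi(D_3)$) is being counted once in $D$ but $n+m$ times on the right-hand side through $D_1$ and $D_2$.
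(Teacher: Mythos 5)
Your proof is correct and follows essentially the same route as the paper: partition cycles (resp.\ paths) by which arcs between $u$ and $v$ they use, count each class via the fixed-arc digraphs $D_1,D_2$ with multiplicity factors $n,m$, and observe that the two-arc case contributes the $nm\,x^2$ term for cycles and is impossible for paths. Your multiplicity bookkeeping ($n$ with $\sigma(D_1)$, $m$ with $\sigma(D_2)$) is in fact the consistent one matching the theorem statement, whereas the paper's proof text transposes $n$ and $m$ in an apparent typo.
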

\begin{proof}
	The number of cycles in $D$ containing one of the arcs from $u$ to $v$ but no other arcs between $u$ and $v$ equals $m$ times the number of cycles in $D$ containing a fixed arc $e=(u,v)$ but no other arcs between $u$ and $v$, since $e$ can be replaced by any arc parallel to $e$ and form a different cycle. Therefore, these cycles can be counted by $m[\sigma(D_1)-\sigma(D_3)]$. Cycles in $D$ containing one of the arcs from $v$ to $u$ can be counted by $n[\sigma(D_2)-\sigma(D_3)]$. And cycles not containing any arcs between $u$ and $v$ can be counted by $\sigma(D_3)$. Cycles containing one arc $(u,v)$ and an arc $(v,u)$ is counted by $nmx^2$. We add these four terms together to obtain the reduction formula.\\
	The reduction for the path polynomial is analogous. The only difference is that the last term is not required.	
\end{proof}
\begin{figure}
	\includegraphics[width=350px]{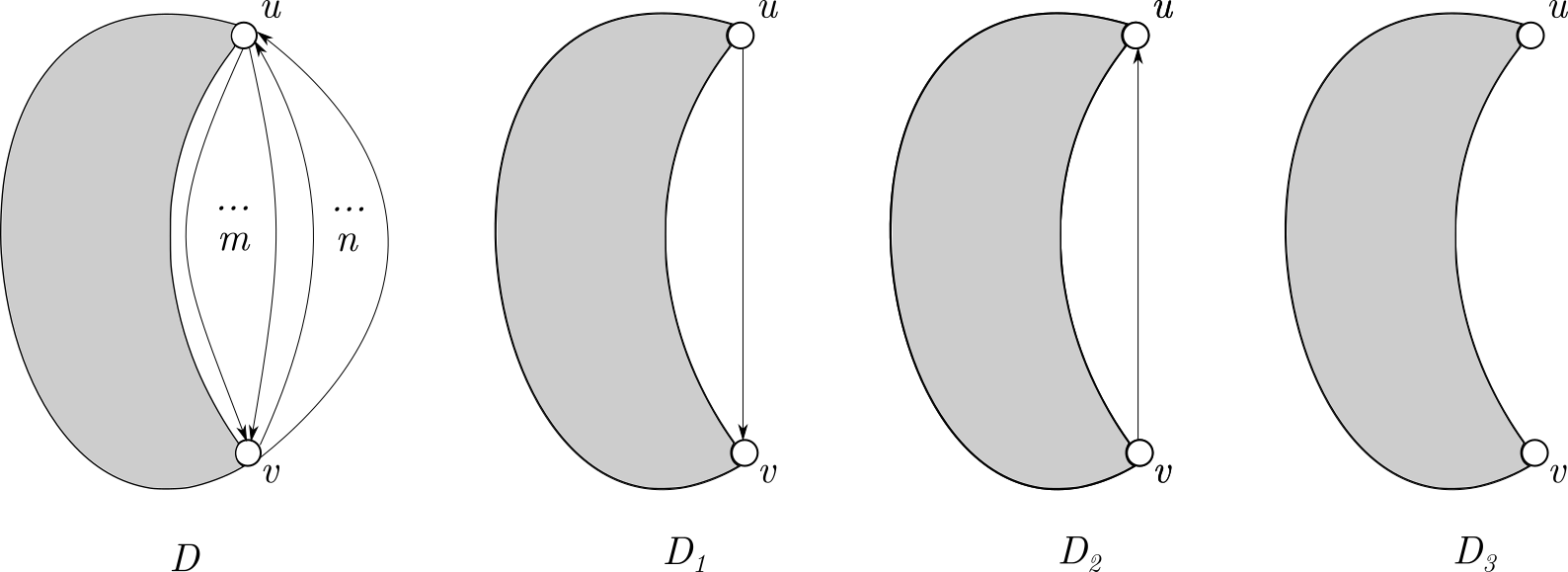}
	\caption{Simplification of parallel and anti-parallel arcs}		
\end{figure}

Now two vertex operations for digraphs need to be defined in order to state the vertex decomposition formulae. Given is a digraph $D=(V,E)$ and $v \in V$, the sets $N^+(v):= \{u\in V\big\vert (v,u)\in E\}$ and $N^-(v):= \{u\in V\big\vert (u,v)\in E\}$ are called the \emph{out-neighborhood} and the \emph{in-neighborhood} of $v$ in $D$, respectively.
\begin{itemize}\setlength{\itemsep}{0pt}
	\item Vertex deletion. The digraph obtained from $D$ by removing the vertex $v$ and all its incident arcs is denoted by $D_{-v}$.
	\item Vertex contraction. If the arcs incident with $v$ are not multiple, $D_{/e}$ is defined as the digraph obtained from $D_{-v}$ by adding the arcs of $N^-(v)\times N^+(v)$. For a multidigraph, the multiplicity of an added arc $(u,w)$ equals the multiplicity of $(u,v)$ times the multiplicity of $(v,w)$.
\end{itemize}
\begin{figure}
	\includegraphics[width=350px]{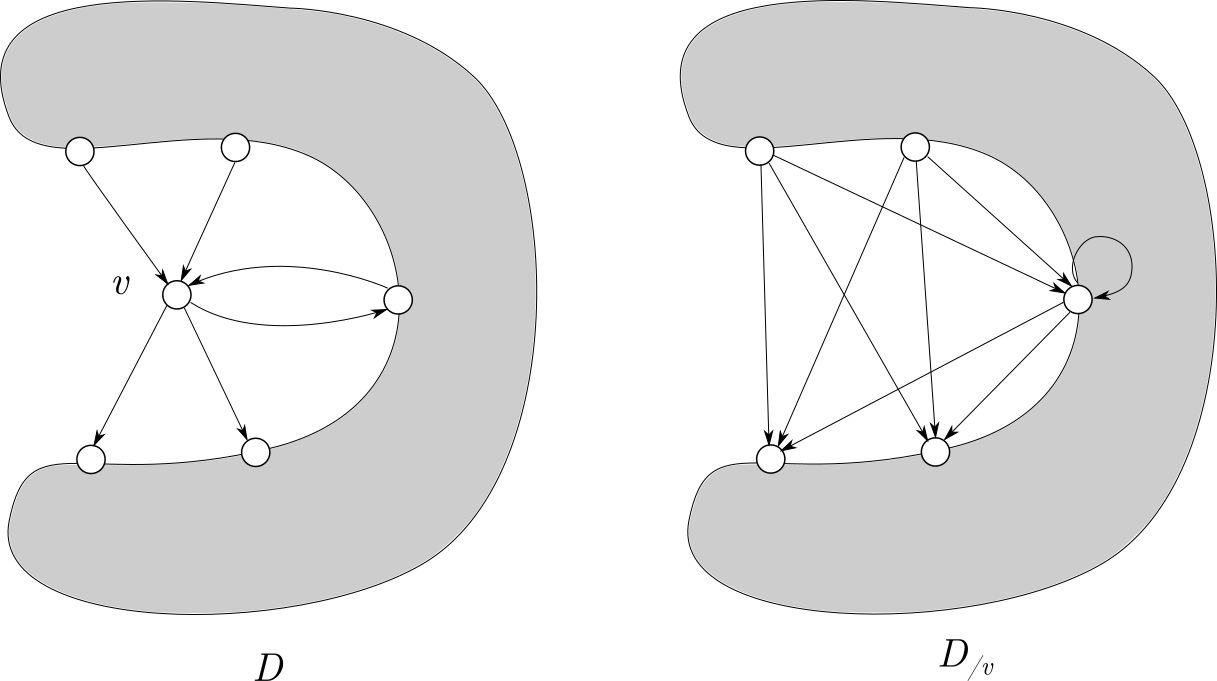}
	\caption{Vertex contraction on a digraph}		
\end{figure}
Given is a digraph $D=(V,E)$ and $v\in V$, $E^-(v)$ and $E^+(v)$ are defined to be the sets of arcs with head $v$ and tail $v$, respectively, that is, $E^+(v):=\{e=(v,u)\in E\}$ and $E^-(v):=\{e=(u,v)\in E\}$. We call $\degout(v):=|E^+(v)|$ the \emph{out-degree} and $\degin(v):=|E^-(v)|$ the \emph{in-degree} of $v$. The number of directed paths of length $k$ beginning with $v$ in $D$ is denoted by $p_k(D,v,+)$. Similarly, the number of directed paths of length $k$ ending with $v$ in $D$ is denoted by $p_k(D,v,-)$. We define the ordinary generating functions for $p_k(D,v,+)$ and $p_k(D,v,-)$:
\[
\pi_{v+}(D):=\sum_{k=1}^{|V|-1}p_k(D,v,+)x^k
\]
and
\[
\pi_{v-}(D):=\sum_{k=1}^{|V|-1}p_k(D,v,-)x^k.
\]
$\pi_{v+}(D)$ and $\pi_{v-}(D)$ satisfy the following decomposition formula:
\begin{theorem}
	Let $D=(V,E)$ be a digraph and $v\in V$, then
	\[
	\pi_{v+}(D)=\degout(v)\cdot x+x\sum_{u:(v,u)\in E^+(v)}\pi_{u+}(D_{-v}),
	\]
	and
	\[
	\pi_{v-}(D)=\degin(v)\cdot x+x\sum_{u:(u,v)\in E^-(v)}\pi_{u-}(D_{-v}).
	\]
\end{theorem}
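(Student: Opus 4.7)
The plan is to prove the first identity by a straightforward bijective/counting decomposition of paths starting at $v$ according to their length, and then obtain the second identity by the symmetric argument (equivalently, by reversing all arcs).

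First I would unpack the coefficient of $x^k$ on each side. For $k=1$, a directed path of length $1$ starting at $v$ is precisely an out-arc at $v$, so $p_1(D,v,+)=\degout(v)$ (in a multidigraph each parallel arc gives a distinct path, which is why summing over arcs in $E^+(v)$, rather than over neighbours, is the correct interpretation). This accounts for the $\degout(v)\cdot x$ term. For $k\geq 2$, I would set up a bijection between directed paths of length $k$ in $D$ that start at $v$ and pairs $(e,P')$ where $e=(v,u)\in E^+(v)$ and $P'$ is a directed path of length $k-1$ in $D_{-v}$ starting at $u$. The forward map splits a path $v\to u\to\cdots$ into its first arc and the remainder; since a path visits distinct vertices, the remainder never revisits $v$ and hence lives in $D_{-v}$. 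The inverse map prepends the chosen arc $e$ to $P'$; the result is again a simple path because $v\notin V(D_{-v})$. This gives the recursion $p_k(D,v,+)=\sum_{e=(v,u)\in E^+(v)}p_{k-1}(D_{-v},u,+)$.

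Now I would assemble the generating functions. Multiplying the length-$k$ identity by $x^k$ and summing over $k\geq 2$ gives
\[
\sum_{k\geq 2}p_k(D,v,+)x^k \;=\; x\sum_{e=(v,u)\in E^+(v)}\sum_{k\geq 2}p_{k-1}(D_{-v},u,+)x^{k-1} \;=\; x\sum_{e=(v,u)\in E^+(v)}\pi_{u+}(D_{-v}),
\]
and adding the $k=1$ contribution $\degout(v)\cdot x$ yields the claimed formula. The second identity follows by the same decomposition applied to the last arc of a path ending at $v$, or equivalently by applying the first identity to the reverse digraph $D^{\mathrm{op}}$ (where $\pi_{v-}(D)=\pi_{v+}(D^{\mathrm{op}})$ and $\degin_D(v)=\degout_{D^{\mathrm{op}}}(v)$).

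The only point requiring mild care is the handling of multiple arcs and, if present, a loop at $v$: the sum must be indexed over arcs in $E^+(v)$ rather than over distinct out-neighbours, and a loop at $v$ contributes to $\degout(v)$ but cannot extend to a longer path (the corresponding term $\pi_{v+}(D_{-v})$ vanishes because $v\notin V(D_{-v})$). Since paths have distinct vertices, a loop never starts a path of length $\geq 1$, so the formula is consistent provided one reads $\degout(v)$ as the number of non-loop out-arcs in that case; this is the only bookkeeping subtlety and I expect it to be the main potential source of confusion, but it is not a genuine obstacle to the argument.
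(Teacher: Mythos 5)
Your proof is correct and takes essentially the same approach as the paper: the paper's (much terser) argument likewise counts the $\degout(v)$ paths of length $1$ and splits each longer path into its first arc $(v,u)$ and a remaining path in $D_{-v}$ starting at $u$, exactly your bijection summed into generating functions. Your closing remark about loops at $v$ is a legitimate refinement the paper silently glosses over (as stated, $\degout(v)$ must be read as the number of non-loop out-arcs, since a loop is not a path of length $1$ and its term $\pi_{v+}(D_{-v})$ vanishes), but it does not alter the method.
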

\begin{proof}
	$\pi_{v+}(D)$ is the generating function for the number of paths of $D$ beginning at $v$. There are exactly $\degout(v)$ such paths of length 1. Each such path of length greater than 1 contains exactly one arc $(v,u)$ with tail $v$, and the remaining part of this path can be any path in $D_{-v}$ beginning at $u$. The proof of the second formula is analogous.
\end{proof}

Then we have the vertex decomposition formulae for $\sigma(D)$ and $\pi(D)$.
\begin{theorem}
	Let $D=(V,E)$ be a digraph and $v\in V$, then we have
	\[
	\sigma(D)=(1-x)\sigma(D_{-v})+x\sigma(D_{/v}),
	\]
	and
	\[
	\pi(D)=(1-x)\pi(D_{-v})+x\pi(D_{/v})+\pi_{v+}(D)+\pi_{v-}(D).
	\]
\end{theorem}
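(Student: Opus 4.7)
My plan is to prove each identity by partitioning the directed cycles (respectively, paths) of $D$ according to how they interact with $v$ and then matching each class to a contribution on the right-hand side via the vertex contraction $D_{/v}$.

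For the cycle identity, every directed cycle of $D$ either avoids $v$ or passes through $v$. Cycles avoiding $v$ coincide exactly with cycles of $D_{-v}$, contributing $\sigma(D_{-v})$. For a cycle $v\to u_1\to\cdots\to u_{k-1}\to v$ of length $k$ through $v$, contracting $v$ merges the arcs $(u_{k-1},v)$ and $(v,u_1)$ into a single shortcut arc $(u_{k-1},u_1)\in N^-(v)\times N^+(v)$, turning the cycle into $u_1\to\cdots\to u_{k-1}\to u_1$ of length $k-1$ in $D_{/v}$ that uses at least one shortcut arc; I would verify that this is a bijection between cycles of $D$ through $v$ and cycles of $D_{/v}$ employing at least one shortcut. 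Since cycles of $D_{/v}$ avoiding every shortcut arc are precisely the cycles of $D_{-v}$, the generating function of shortcut-using cycles equals $\sigma(D_{/v})-\sigma(D_{-v})$, and the length drop of $1$ contributes a factor of $x$. Assembling the two contributions gives
\[
\sigma(D)=\sigma(D_{-v})+x\bigl[\sigma(D_{/v})-\sigma(D_{-v})\bigr]=(1-x)\sigma(D_{-v})+x\sigma(D_{/v}).
\]

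For the path identity, I would partition the directed paths of $D$ into four disjoint classes according to the role of $v$: paths not incident with $v$ (contributing $\pi(D_{-v})$), paths with $v$ strictly in the interior, paths beginning at $v$ (contributing $\pi_{v+}(D)$ by definition), and paths ending at $v$ (contributing $\pi_{v-}(D)$). The interior-$v$ class is handled by the same contraction argument as the cycles through $v$: each such path of length $k$ in $D$ corresponds bijectively to a path of length $k-1$ in $D_{/v}$ using at least one shortcut arc, and hence contributes $x\bigl[\pi(D_{/v})-\pi(D_{-v})\bigr]$. Summing the four contributions yields the stated formula.

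The main technical obstacle in either case is the multigraph bookkeeping required by the bijection: when $(u,v)$ has multiplicity $a$ and $(v,w)$ has multiplicity $b$ in $D$, there are $ab$ distinct choices of in/out arc pair at $v$ for a cycle or interior-$v$ path, and this is precisely matched by the multiplicity $ab$ assigned to the shortcut arc $(u,w)$ in $D_{/v}$ as stipulated in the definition of vertex contraction. Checking this multiplicative accounting term by term is the one place where care is needed; the rest is a straightforward partition-and-bijection argument.
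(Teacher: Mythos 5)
Your plan is the paper's own argument spelled out in detail: the paper's proof of this theorem is a single sentence asserting exactly the correspondence you describe, namely that the cycles (and the paths with $v$ interior) of $D_{/v}$ are the cycles and paths of $D$ avoiding $v$ together with those through $v$ shortened by one. But the step you defer --- ``I would verify that this is a bijection'' --- is precisely where the argument breaks, and it cannot be repaired: a cycle of $D_{/v}$ may use \emph{two or more} shortcut arcs, and such a cycle expands to a closed walk of $D$ visiting $v$ more than once, not to a cycle of $D$. Your map from cycles of $D$ through $v$ to shortcut-using cycles of $D_{/v}$ is injective but not surjective, so $\sigma(D_{/v})-\sigma(D_{-v})$ overcounts. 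Concretely, take $V=\{v,a,b\}$ and $E=\{(a,v),(v,a),(b,v),(v,b)\}$. Then $\sigma(D)=2x^2$ and $\sigma(D_{-v})=0$, while $D_{/v}$ on $\{a,b\}$ carries the four shortcut arcs $(a,a),(b,b),(a,b),(b,a)$, so $\sigma(D_{/v})=2x+x^2$ and
\[
(1-x)\sigma(D_{-v})+x\sigma(D_{/v})=2x^2+x^3\neq 2x^2=\sigma(D);
\]
the spurious $x^3$ comes from the $2$-cycle $a\to b\to a$ of $D_{/v}$, built from the two shortcuts representing the walks $a\to v\to b$ and $b\to v\to a$. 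The same phenomenon defeats the path identity: with arcs $(a,v),(v,b),(b,c),(c,v),(v,d)$, the path $a\to b\to c\to d$ of $D_{/v}$ uses two shortcut arcs and contributes an extra $x^4$ to the right-hand side that no path of $D$ accounts for.

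So the gap is not the multigraph bookkeeping you worried about --- your $ab$ multiplicity accounting matches the paper's convention and is fine --- but the surjectivity of the contraction correspondence, which fails whenever a cycle or path of $D_{/v}$ can use at least two shortcut arcs. Since the displayed identities themselves fail on these examples, no amount of care can complete your plan as written; the statement would need an additional hypothesis (no cycle or path of $D_{/v}$ traverses two shortcuts) or a correction term subtracting the multi-shortcut configurations. You should also know that the paper's own proof is the same one-sentence claim and has the identical defect: your proposal faithfully reproduces the intended argument, and its one deferred verification is exactly the false step.
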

\begin{proof}
	$D_{/v}$ contains exactly cycles and paths of $D$ not containing $v$, and cycles and paths of $D$ containing $v$ but $v$ is neither source or sink of a path, with length decreased by 1.
\end{proof}

The decomposition formulae for digraphs are easier than that for graphs. There are also relationships between the digraph version and graph version of these polynomials. Let $\sigma(G)$ and $\pi(G)$ be ordinary generating functions for the undirected cycles and paths in an undirected graph $G$, respectively.
\begin{theorem}
	Let $D(G)$ denote the digraph obtained from the undirected graph $G$ by replacing each edge $\{u,v\}\in E$ by two oppositely oriented arcs $(u,v)$ and $(v,u)$. Then we have
	\[
	\pi(D(G))=2\pi(G),
	\]
	and
	\[
	\sigma(D(G))=2\sigma(G)+|E(G)|x^2.
	\]
\end{theorem}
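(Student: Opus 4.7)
The plan is to establish both identities by direct bijective counting at the level of each coefficient $p_k$ and $c_k$, exploiting the fact that $D(G)$ has the arc $(u,v)$ if and only if $G$ has the edge $\{u,v\}$, so that forgetting orientation defines a natural map from directed walks in $D(G)$ to undirected walks in $G$.

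For the path identity, I would first argue that for every $k \geq 1$, sending a directed path $v_0 \to v_1 \to \cdots \to v_k$ in $D(G)$ to the undirected path $v_0 - v_1 - \cdots - v_k$ in $G$ is a well-defined two-to-one surjection: each undirected path in $G$ has exactly two preimages, obtained by choosing one of the two possible traversal directions, and these preimages are distinct since $k \geq 1$. This yields $p_k(D(G)) = 2 p_k(G)$ for every $k$, and summing over $k$ gives $\pi(D(G)) = 2 \pi(G)$.

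For the cycle identity, I would repeat the same bijective argument for $k \geq 3$: each undirected $k$-cycle (as a subgraph of $G$) corresponds to exactly two directed $k$-cycles in $D(G)$, one for each traversal direction, giving $c_k(D(G)) = 2 c_k(G)$ for $k \geq 3$. The case $k = 2$ must be treated separately, since $G$ contains no undirected $2$-cycle while $D(G)$ contains the directed $2$-cycle $u \to v \to u$ for every edge $\{u,v\} \in E(G)$, and these are in bijection with $E(G)$. Hence $c_2(D(G)) = |E(G)|$, contributing the extra term $|E(G)|x^2$. Adding everything together, $\sigma(D(G)) = 2 \sigma(G) + |E(G)| x^2$.

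The main thing to be careful about is the $k=2$ case for cycles, where the otherwise uniform two-to-one correspondence breaks down; everything else is a routine verification of the bijections. No recurrence machinery from the earlier theorems is needed, since the identities can be read off directly from the combinatorial description of $D(G)$.
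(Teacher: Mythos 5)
Your proof is correct and follows essentially the same route as the paper: the paper likewise uses the two-to-one direction-reversal correspondence between undirected paths/cycles of $G$ and directed ones of $D(G)$, with the $|E(G)|$ directed $2$-cycles arising from single edges as the sole exception. Your version is simply a more explicit, coefficient-by-coefficient writeup of the paper's terse argument, including the correct isolation of the $k=2$ case.
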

\begin{proof}
	Each path or cycle of $G$ corresponds to two directed paths or cycles of different directions in $D(G)$. It is easy to see, directed cycles or paths of $G$ arising from different cycles or paths of $G$ are different, and all directed cycles or paths of $D(G)$ arise from corresponding cycles and paths of $G$ except the $|E|$ cycles consisting of two arcs arising from one edge of $G$.
\end{proof}

\section{Generalizations of Cycle and Path Polynomials for Digraphs}\label{sec33}
The next goal of this paper is to find the relationship between the (geometric) cover polynomial and our polynomials. Now we define the \emph{bivariate cycle polynomial} $\widehat{\sigma}(D)=\widehat{\sigma}(D;x,y)$ and the \emph{bivariate path polynomial} $\widehat{\pi}(D)=\widehat{\pi}(D;x,y)$ of a digraph $D$. Let $D=(V,E)$ be a digraph, let $kc(D)$ and $kp(D)$ denote the number of components of $D$ which are directed cycles and directed paths, respectively. We define
\[
\widehat{\sigma}(D)=\widehat{\sigma}(D;x,y)=\sum_{F}x^{|F|}y^{kc(D\langle F\rangle)},
\]
where the sum is over all subsets $F$ of $E$ that each component of the spanning subgraph $D\langle F\rangle$ is either a directed cycle or an isolated vertex. And we define
\[
\widehat{\pi}(D)=\widehat{\pi}(D;x,y)=\sum_{F}x^{|F|}y^{kp(D\langle F\rangle)},
\]
where the sum is over all subsets $F$ of $E$ that each component of the spanning subgraph $D\langle F\rangle$ is either a directed path or an isolated vertex. Obviously $\widehat{\sigma}(D)$ and $\widehat{\pi}(D)$ are multiplicative under components, and $\widehat{\sigma}(E_n)=\widehat{\pi}(E_n)=1$ for all $n\geq 0$. We have following recurrences for $\widehat{\sigma}(D)$ and $\widehat{\pi}(D)$:
\begin{theorem}
	\[
	\widehat{\sigma}(D)=
	\begin{cases}
	\widehat{\sigma}(D_{-e})+xy\widehat{\sigma}(D_{/e}) & \textrm{if $e$ is a loop,}\\
	\widehat{\sigma}(D_{-e})+x\widehat{\sigma}(D_{/e})-x\widehat{\sigma}(D_{\dag e}) & \textrm{otherwise.}
	\end{cases}
	\]
	\[
	\widehat{\pi}(D)=
	\begin{cases}
	\widehat{\pi}(D_{-e}) & \textrm{if $e$ is a loop,}\\
	\widehat{\pi}(D_{-e})+x\widehat{\pi}(D_{/e})+x(y-1)\widehat{\pi}(D_{\dag e}) & \textrm{otherwise.}
	\end{cases}
	\]
\end{theorem}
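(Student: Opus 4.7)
The plan is to mimic the deletion-contraction arguments used for the cycle polynomial $\sigma$ and the path polynomial $\pi$ earlier in the paper. In each defining sum I split over whether the distinguished arc $e$ lies in the configuration $F$. The sum over $F$ with $e\notin F$ is precisely $\widehat{\sigma}(D_{-e})$ (resp.\ $\widehat{\pi}(D_{-e})$) in every case, so the real work is reorganising the sum over $F\ni e$ into the $D_{/e}$ and $D_{\dag e}$ terms.

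For $\widehat{\sigma}$ with $e$ a loop at a vertex $u$, if $e\in F$ then $\{e\}$ is a cycle component by itself, contributing a factor $xy$, while the remaining arcs of $F$ are incident to neither endpoint of $e$ and form a valid configuration of $D_{/e}$ (which for a loop is $D_{-e}$ with $u$ deleted); this yields $xy\widehat{\sigma}(D_{/e})$. For $\widehat{\sigma}$ with $e=(u,v)$ not a loop and $e\in F$, the cycle of $F$ through $e$ has length at least $2$, and contracting $e$ turns this cycle into a cycle one shorter through the merged vertex $w$ of $D_{/e}$ while preserving the number of cycle components and decreasing $|F|$ by $1$. The crucial observation is that any valid $F$ containing $e$ must have $u$ of out-degree exactly $1$ and $v$ of in-degree exactly $1$, hence contains no arc of the forms $(u,x)$ with $x\ne v$ or $(y,v)$ with $y\ne u$; these are precisely the arcs deleted by the contraction rule, so no information is lost. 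Thus $F\mapsto F\setminus\{e\}$, suitably relabelled in $D_{/e}$, is a bijection between valid $F\ni e$ in $D$ and valid $F'$ in $D_{/e}$ in which $w$ is not isolated. By multiplicativity the configurations with $w$ isolated correspond to configurations in $D_{\dag e}$, so the $e\in F$ contribution is $x[\widehat{\sigma}(D_{/e})-\widehat{\sigma}(D_{\dag e})]$.

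For $\widehat{\pi}$ the loop case is immediate, since a directed path never traverses a loop. For $e=(u,v)$ not a loop I split the $F\ni e$ further according to whether $\{e\}$ is itself a path component of $D\langle F\rangle$. If it is, then the remaining arcs of $F$ avoid both $u$ and $v$ and give a valid configuration of $D_{\dag e}$; since this introduces one new path component of length $1$ the contribution is $xy\widehat{\pi}(D_{\dag e})$. Otherwise $e$ lies inside a path of length at least $2$, and the same contraction argument as for $\widehat{\sigma}$ gives a bijection with valid $F'$ in $D_{/e}$ in which $w$ is non-isolated, preserving $kp$ and decreasing $|F|$ by $1$; this contributes $x[\widehat{\pi}(D_{/e})-\widehat{\pi}(D_{\dag e})]$. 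Adding the three terms produces the stated $\widehat{\pi}(D_{-e})+x\widehat{\pi}(D_{/e})+x(y-1)\widehat{\pi}(D_{\dag e})$.

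The main point requiring care is the contraction bijection. I will need to verify that arcs $(v,u)$, which become loops at $w$ in $D_{/e}$, correspond correctly to $2$-cycles through $e$ in the $\widehat{\sigma}$ analysis and are automatically ignored in the $\widehat{\pi}$ analysis (since no valid path configuration can contain such a loop), and conversely that no valid $F$ with $e\in F$ can use any arc removed by the contraction rule. Once this is in place, the rest of each identity is just bookkeeping analogous to the proofs of the earlier recurrences for $\sigma$ and $\pi$.
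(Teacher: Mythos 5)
Your proof is correct and follows essentially the same route as the paper's: splitting each sum over whether $e\in F$, handling the loop cases directly, and using the contraction bijection (with the $w$-isolated configurations subtracted off as $\widehat{\sigma}(D_{\dag e})$, resp.\ $\widehat{\pi}(D_{\dag e})$) plus the extra singleton-path term $xy\widehat{\pi}(D_{\dag e})$ for $\widehat{\pi}$. Your explicit verification that the degree constraints on valid $F\ni e$ make the arcs removed by contraction harmless, and that arcs $(v,u)$ becoming loops at $w$ behave correctly, is a slightly more careful spelling-out of what the paper asserts in one line, but it is the same argument.
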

\begin{proof}
	Let $D=(V,E)$ be a digraph. 
	For $\widehat{\sigma}(D)$, we enumerate the arc subsets $F\subseteq E$ such that each component of the spanning subgraph $D\langle F\rangle$ is either a directed cycle or an isolated vertex. For each $e\in E$ there are two kinds of $F$: either $e\notin F$ or $e\in F$. \\
	If $e$ is a loop in $D$, the arc subsets $F$ of the first kind is counted by $\widehat{\sigma}(D_{-e})$. By the second kind, no other arcs in $F$ can be incident to the loop $e$, and the rest of $F$ corresponds to such an arc subset of $D_{\dagger e}=D_{/e}$. $e$ contributes one cycle of length $1$ and one arc to the polynomial. Thus, the second kind of $F$ is enumerated by $xy\widehat{\sigma}(D_{/e})$.\\
	If $e\in E$ is not a loop, the arc subsets $F$ not containing $e$ are counted by $\widehat{\sigma}(D_{-e})$. Consider now the digraph $D_{/e}$ and let $w$ be the new resulting vertex after contraction. Since all arcs with the same head or the same tail as $e$ are removed and the other arcs hold, each cycle of $D_{/e}$ containing $w$ corresponds to a cycle of $D$ containing $e$ and vice versa. The cycles of $D_{/e}$ not containing $w$ are identical to the cycles of $D_{\dagger e}$. However, $e$ contributes one arc to the polynomial. Thus the subsets $F$ of the second kind are enumerated by $x[\widehat{\sigma}(D_{/e})-\widehat{\sigma}(D_{\dag e})]$.\\
	The recurrence relation for $\widehat{\sigma}(D)$ is obtained by summing up these cases.\\
	Now consider $\widehat{\pi}(D)$. If $e$ is a loop, the spanning subgraphs of $D$ containing $e$ do not contribute to the polynomial. The spanning subgraphs of $D$ not containing $e$ are the spanning subgraphs of $D_{-e}$. That is, $\widehat{\pi}(D)=\widehat{\pi}(D_{-e})$ if $e$ is a loop.\\
	If $e$ is not a loop, in addition to the cases that contributed to the calculation of $\widehat{\sigma}(D)$ there is one more case: $e$ is the only arc of a component of the spanning subgraph. Any arc incident to $e$ cannot be in a spanning subgraph contributing to the polynomial, and $e$ contributes one arc and one directed path to the polynomial. Thus the spanning subgraphs containing $e$ as the only arc of a component, whose each component is either a directed path or an isolated vertex, are enumerated by $xy\widehat{\pi}(D_{\dagger e})$. Together with the other cases we obtain the recurrence relation.
\end{proof}

Furthermore, we can define the \emph{trivariate cycle-path polynomial} $\widehat{\sigma\pi}(D)=\widehat{\sigma\pi}(D;x,y,z)$ of a digraph $D$ counting all spanning subgraphs of $D$ whose components are either directed cycles or directed paths or isolated vertices: 
\[
\widehat{\sigma\pi}(D;x,y,z)=\!\!\!\!\sum_{\substack{F\subseteq E \\ \forall v\in V:\degout_{D\langle F\rangle}(v)\leq 1 \\ \forall v\in V:\degin_{D\langle F\rangle}(v)\leq 1}}\!\!\!\!x^{|F|}y^{kc(D\langle F\rangle)}z^{kp(D\langle F\rangle)}.
\]
Because of the same arguments as in the proof of the last theorem, we have the following recurrence relation for $\widehat{\sigma\pi}(D)$:
\begin{theorem}
	$\widehat{\sigma\pi}(D)=\widehat{\sigma\pi}(D;x,y,z)$ satisfies the following recurrence relation
	\[
	\widehat{\sigma\pi}(D)=
	\begin{cases}
	\widehat{\sigma\pi}(D_{-e})+xy\widehat{\sigma\pi}(D_{/e}) & \textrm{if $e$ is a loop,}\\
	\widehat{\sigma\pi}(D_{-e})+x\widehat{\sigma\pi}(D_{/e})+x(z-1)\widehat{\sigma\pi}(D_{\dag e}) & \textrm{otherwise.}
	\end{cases}
	\]
	And the initial condition is $\widehat{\sigma\pi}(E_n)=1$.
\end{theorem}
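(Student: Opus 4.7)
The plan is to mimic the enumeration used for $\widehat{\sigma}$ and $\widehat{\pi}$ in the previous theorem, but now track cycle and path components simultaneously. Fix an arc $e$ and partition the admissible arc subsets $F$ (those with $\degin_{D\langle F\rangle}(v)\le 1$ and $\degout_{D\langle F\rangle}(v)\le 1$ for every $v$) according to whether $e\in F$. The subsets with $e\notin F$ are exactly the admissible subsets of $D_{-e}$, and they contribute $\widehat{\sigma\pi}(D_{-e})$ in both the loop and non-loop cases.

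If $e$ is a loop at a vertex $u$, any admissible $F$ containing $e$ forces $u$ to be incident to no other arc of $F$, so $\{e\}$ is a cycle component of length one. The remaining arcs then range over admissible subsets of $D_{\dag e}=D_{/e}$, and together with the factor $xy$ (one arc, one cycle) this contributes $xy\,\widehat{\sigma\pi}(D_{/e})$.

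If $e=(u,v)$ is not a loop, I would argue via the bijection $F\mapsto F':=F\setminus\{e\}$, which identifies admissible subsets $F\subseteq E(D)$ with $e\in F$ with admissible subsets $F'\subseteq E(D_{/e})$ after merging $u,v$ into the new vertex $w$. I would then subdivide according to whether $w$ is isolated in $D_{/e}\langle F'\rangle$: if $w$ is isolated, then in $D\langle F\rangle$ the arc $e$ forms a single-arc path component, so $kp$ goes up by one, and such $F'$ correspond exactly to admissible subsets of $D_{\dag e}$, contributing $xz\,\widehat{\sigma\pi}(D_{\dag e})$; if $w$ is not isolated, then $e$ lies as an internal arc of the cycle or path at $w$, preserving both the component type and the values of $kc$ and $kp$, contributing $x[\widehat{\sigma\pi}(D_{/e})-\widehat{\sigma\pi}(D_{\dag e})]$. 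Summing the two subcases yields $x\,\widehat{\sigma\pi}(D_{/e})+x(z-1)\,\widehat{\sigma\pi}(D_{\dag e})$, as required.

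The main obstacle is the refinement in the non-loop case: one has to verify that a cycle (respectively path) component through $w$ in $D_{/e}\langle F'\rangle$ corresponds exactly to a cycle (respectively path) component through $e$ in $D\langle F\rangle$, and that the \emph{$w$ isolated} scenario is precisely where $e$ becomes its own path component in $D$, which is what produces the extra $z$-factor beyond the $\widehat{\sigma}$-style recurrence. The initial condition $\widehat{\sigma\pi}(E_n)=1$ is immediate because the empty arc set is the only admissible subset of an arc-less digraph.
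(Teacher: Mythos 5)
Your proposal is correct and takes essentially the same route as the paper, which proves this theorem simply by invoking the argument of the preceding theorem for $\widehat{\sigma}$ and $\widehat{\pi}$: your case split on whether the contracted vertex $w$ is isolated in $D_{/e}\langle F'\rangle$ is exactly the paper's split on whether $e$ is the only arc of its component, yielding the same $x\,\widehat{\sigma\pi}(D_{/e})+x(z-1)\,\widehat{\sigma\pi}(D_{\dag e})$ bookkeeping in the non-loop case and $xy\,\widehat{\sigma\pi}(D_{/e})$ in the loop case (using $D_{/e}=D_{\dag e}$ for loops). No gaps; your explicit bijection $F\mapsto F\setminus\{e\}$ is justified precisely because contraction removes all arcs sharing the tail $u$ or the head $v$ with $e$, which is the same observation the paper relies on.
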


The following formulae follow direct from definition:
\[\sigma(D;x)=[y^1]\widehat{\sigma}(D;x,y),\]
\[\pi(D;x)=[y^1]\widehat{\pi}(D;x,y),\]
\[\widehat{\sigma}(D;x,y)=\widehat{\sigma\pi}(D;x,y,0),\]
\[\widehat{\pi}(D;x,y)=\widehat{\sigma\pi}(D;x,0,y).\]	
The geometric cover polynomial counts the number of cycle-path covers of a digraph. Since isolated vertices are regarded as directed paths of length 0, the number of paths in a cycle-path cover equals the number of vertices minus the number of arcs in this cover. We have the following relationship.
\begin{theorem}
	If $D=(V,E)$ is a digraph, then
	\[
	\widetilde{C}(D;x,y)=x^{|V|}\widehat{\sigma\pi}(D;\frac{1}{x},y,1).
	\]
\end{theorem}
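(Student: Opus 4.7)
The plan is to expand both sides by their defining sums and match term-by-term, using the key observation that setting $z=1$ in $\widehat{\sigma\pi}$ erases the distinction between isolated vertices and path components of length $\geq 1$, which is exactly what is needed to match the cover-polynomial convention that counts isolated vertices as paths of length $0$.

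First, I would unpack the right-hand side. By definition,
\[
\widehat{\sigma\pi}(D;1/x,y,1)=\sum_{F}x^{-|F|}y^{kc(D\langle F\rangle)},
\]
where the sum ranges over arc subsets $F\subseteq E$ such that every vertex of $D\langle F\rangle$ has in-degree and out-degree at most $1$. Such spanning subgraphs are precisely the disjoint unions of directed cycles, directed paths (of length $\geq 1$), and isolated vertices; that is, after reinterpreting isolated vertices as length-$0$ paths, these are exactly the cycle-path covers counted by $\widetilde{C}$.

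Next, I would do the book-keeping that links $|F|$ to the number of path components in the cover-polynomial sense. Fix such an $F$, and suppose $D\langle F\rangle$ has $j$ cycle components and, in the cover-polynomial convention (isolated vertices included), $i$ path components. Each cycle component of length $\ell$ contributes $\ell$ arcs and $\ell$ vertices; each path component (of length $\ell\geq 0$) contributes $\ell$ arcs and $\ell+1$ vertices. Summing over components gives $|V|=|F|+i$, hence $|F|=|V|-i$; and $kc(D\langle F\rangle)=j$ trivially. Note that $kp(D\langle F\rangle)$ need not equal $i$ (isolated vertices are excluded from $kp$), but this is harmless because $z=1$ removes that statistic from the sum.

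Collecting terms according to the pair $(i,j)$ therefore yields
\[
\widehat{\sigma\pi}(D;1/x,y,1)=\sum_{i,j}c_{i,j}(D)\,x^{-(|V|-i)}y^j=x^{-|V|}\sum_{i,j}c_{i,j}(D)\,x^iy^j=x^{-|V|}\widetilde{C}(D;x,y),
\]
and multiplying through by $x^{|V|}$ gives the claimed identity. The only substantive step is the vertex-counting identity $|V|=|F|+i$ for cycle-path covers; beyond that, the proof is a direct expansion. I do not expect any real obstacle, since the choice of substitution $z=1$ is tailored precisely to reconcile the two differing conventions on isolated vertices.
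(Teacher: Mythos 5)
Your proof is correct and takes essentially the same approach as the paper, which states this theorem without a formal proof but justifies it by precisely your key identity: since isolated vertices are regarded as directed paths of length $0$, the number of paths in a cycle-path cover equals the number of vertices minus the number of arcs, i.e.\ $|F|=|V|-i$. Your expansion with $z=1$ absorbing the $kp$ statistic and the term-by-term matching is exactly the intended argument, just written out in full.
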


\section{The Arc Elimination Polynomial for Digraphs}\label{sec34}
The digraph polynomials $C(D;x,y)$, $\sigma(D;x)$, $\pi(D;x)$, $\widehat{\sigma}(D;x,y)$, $\widehat{\pi}(D;x,y)$ and $\widehat{\sigma\pi}(D;x,y,z)$ satisfy certain linear recurrence relations with respect to deletion, contraction and extraction of an arc. In \cite{averbouch2008most}, Averbouch, Godlin and Makowsky introduced a most general undirected graph polynomial $\xi(G;x,y,z)$ satisfying an edge deletion-contraction-extraction linear recurrence relation, which generalizes the Tutte polynomial \cite{tutte1954contribution}, the matching polynomial \cite{farrell1979introduction} and the bivariate chromatic polynomial \cite{dohmen2003new}. The edge elimination polynomial is defined recursively as follows:
\begin{align*}
\begin{split}
&\xi(G;x,y,z)=\xi(G_{-e};x,y,z)+y\cdot\xi(G_{/e};x,y,z)+z\cdot\xi(G_{\dagger e};x,y,z),\\
&\xi(G_1\cup G_2;x,y,z)=\xi(G_1;x,y,z)\cdot\xi(G_2;x,y,z),\\
&\xi(E_1;x,y,z)=x,\\
&\xi(E_0;x,y,z)=1.
\end{split}
\end{align*}
In this section, we introduce the arc elimination polynomial for digraphs using the ideas of \cite{averbouch2011completeness,averbouch2008most}. 

\begin{theorem}\label{thm4.1}
	The digraph polynomial $\widehat{\xi}(D)=\widehat{\xi}(D;t,x,y,z)$ satisfying the recurrence relation
	\begin{align*}
	\begin{split}
	&\widehat{\xi}(D;t,x,y,z)=t\cdot\widehat{\xi}(D_{-e};t,x,y,z)+y\cdot\widehat{\xi}(D_{/e};t,x,y,z)+z\cdot\widehat{\xi}(D_{\dagger e};t,x,y,z),\\
	&\widehat{\xi}(G_1\cup G_2;x,y,z)=\widehat{\xi}(G_1;t,x,y,z)\cdot\widehat{\xi}(G_2;t,x,y,z),\\
	&\widehat{\xi}(E_1;t,x,y,z)=x,\\
	&\widehat{\xi}(E_0;t,x,y,z)=1
	\end{split}
	\end{align*}
	is well-defined iff $t=1$ or $y=z=0$. In the latter case, $\widehat{\xi}(D)=t^{|E(D)|}x^{|V(D)|}$.
\end{theorem}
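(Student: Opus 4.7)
The theorem asserts a biconditional, so both directions need to be established.

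For \emph{sufficiency}, the case $y = z = 0$ is immediate: the recurrence collapses to $\widehat{\xi}(D) = t\cdot\widehat{\xi}(D_{-e})$, and induction on $|E(D)|$ together with the initial conditions $\widehat{\xi}(E_0) = 1$ and $\widehat{\xi}(E_1) = x$ gives $\widehat{\xi}(D) = t^{|E(D)|}x^{|V(D)|}$. This is manifestly multiplicative and simultaneously proves the closing formula of the theorem. For the case $t = 1$, the recurrence mirrors the edge elimination polynomial of \cite{averbouch2008most}, and well-definedness is most cleanly established by exhibiting an explicit closed-form expression for $\widehat{\xi}(D;1,x,y,z)$; this is deferred to later in Section~\ref{sec34}, where $\widehat{\xi}$ at $t=1$ is co-reduced to the trivariate cycle-path polynomial $\widehat{\sigma\pi}$.

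For \emph{necessity}, suppose $\widehat{\xi}$ is well-defined. The strategy is to apply the arc recurrence in two distinct orderings on carefully chosen small digraphs, equate the outputs, and extract polynomial identities in $(t,x,y,z)$. I would first use the digraph $D_1$ on two vertices $u,v$ carrying a loop at $u$ and a single arc $(u,v)$: eliminating the loop first versus eliminating the arc first, the two resulting expressions for $\widehat{\xi}(D_1)$ differ by exactly $z(t-1)(x-1)$, forcing $z(t-1)(x-1) = 0$. A second digraph, for instance two vertices each carrying two parallel loops, produces via comparison with the multiplicativity axiom a further identity of the shape $(x-1)(y+z) = 0$. A third configuration (loops at both endpoints together with a connecting arc) gives yet another identity, this time mixing the monomials $y^2$, $yz$ and $z^2$ with the $(t-1)$ factor. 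Assuming $t \neq 1$, the first identity forces $x = 1$ or $z = 0$; a short case analysis using the remaining identities eliminates each residual locus other than $y = z = 0$.

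The main obstacle is the combinatorial bookkeeping in the necessity direction: each individual two-ordering computation is mechanical, but one must choose the list of test digraphs so that the system of polynomial identities they generate has common zero locus exactly $\{t = 1\} \cup \{y = z = 0\}$. Ruling out spurious residual pieces, such as $\{x = 1, y = 0\}$ or $\{x = 1, y = -z\}$, which satisfy the first two identities but violate further ones, is what requires the third (and possibly a fourth) test digraph together with a careful subcase argument.
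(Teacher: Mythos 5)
There is a genuine gap, and it sits at the heart of the theorem: the well-definedness in the case $t=1$, which is the substantial claim, is never proved but only deferred. Worse, the deferral target is circular as the paper is organized: the co-reduction between $\xi(D;x,y,z)=\widehat{\xi}(D;1,x,y,z)$ and $\widehat{\sigma\pi}$ is itself proved \emph{assuming} well-definedness (its proof begins by noting that the multiset of empty digraphs produced by a full decomposition is order-independent \emph{because} $\xi$ is well-defined). The paper instead proves sufficiency at $t=1$ directly, by a confluence argument: fix a linear order on the arcs, reduce well-definedness to swapping two successively eliminated arcs, and check all $11$ relative configurations of a pair $e,f$ (disjoint arcs; head-to-tail; common head; common tail; loop configurations; parallel and anti-parallel pairs), verifying in each case that the two second-level expansions agree, using identities such as $D_{-e\dagger f}=D_{\dagger f}$ and $D_{/e\dagger f}=D_{\dagger e\dagger f}$. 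Your instinct that an explicit closed form would be the cleanest route is defensible --- the paper's final theorem does exhibit an explicit subset expansion $N(D;x,y,z)$ and verifies, without assuming well-definedness, that it satisfies the recurrence for \emph{every} arc, which would indeed establish well-definedness at $t=1$ --- but you neither exhibit the formula nor perform that verification, so the sufficiency direction remains unproved in your attempt.

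On necessity, your first test digraph is correct: for the loop at $u$ plus the arc $(u,v)$, the two orders give $t^2x^2+txy+tz+xy+xz$ versus $t^2x^2+txy+txz+xy+z$, a discrepancy of $z(t-1)(x-1)$, as you claim. But your second identity, ``of the shape $(x-1)(y+z)=0$,'' cannot be the outcome of any correct computation: any valid obstruction must vanish identically on the locus where the polynomial \emph{is} well-defined, in particular at $t=1$ for all $x,y,z$, so every such identity carries a factor vanishing at $t=1$; moreover its claimed source (a disjoint union compared against multiplicativity) yields no constraint at all, since arc eliminations inside one component commute trivially with the product over components. Your residual-locus worry about $x=1$ is also moot: $x$ remains an indeterminate of $\widehat{\xi}$ (the initial condition is $\widehat{\xi}(E_n)=x^n$), so $(x-1)$ is a nonzerodivisor and $z(t-1)(x-1)=0$ already forces $t=1$ or $z=0$; what is then missing is a correct second test killing $y$, e.g.\ two arcs $e=(u,v)$, $f=(w,v)$ with a common head and a loop at $u$, which at $z=0$ yields $(t-1)y\,x\bigl((1-t)x-y\bigr)=0$ and hence $t=1$ or $y=0$. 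The paper avoids this hunt for test digraphs altogether: from general arc pairs it extracts the functional alternatives $\widehat{\xi}(D_{\dagger e})=\widehat{\xi}(D_{\dagger f})$ (resp.\ $\widehat{\xi}(D_{/e})=\widehat{\xi}(D_{/f})$) holding for \emph{all} digraphs, and collapses each such branch, via pendant-vertex constructions, to the trivial invariant $x^{|V(D)|}$, which is already the evaluation at $t=1$, $y=z=0$. So your plan for necessity is salvageable but as written contains a false identity and an unverified case analysis, and the sufficiency direction is missing entirely.
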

\begin{proof}
	First, we prove that $t=1$ or $y=z=0$ is the necessary condition for the well-definedness of $\widehat{\xi}(D)$. First consider two arcs $e=(u,v)$, $f=(v,w)$ in $E(D)$, where $u$, $v$ and $w$ are different vertices. In order to be well-defined, $\widehat{\xi}(D)$ must return the same value when the decomposition is applied first to the arc $e$ and then to the arc $f$, as well as when it is applied first to $f$ then to $e$.\\
	Applying decomposition first to $e$ then to $f$, we have
	\begin{align*}
	\begin{split}
	\widehat{\xi}(D)&=t\cdot\widehat{\xi}(D_{-e})+y\cdot\widehat{\xi}(D_{/e})+z\cdot\widehat{\xi}(D_{\dagger e})\\
	&=t^2\cdot\widehat{\xi}(D_{-e-f})+ty\cdot\widehat{\xi}(D_{-e/f})+tz\cdot\widehat{\xi}(D_{-e\dagger f})\\
	&\hspace{10pt}+ty\cdot\widehat{\xi}(D_{/e-f})+y^2\cdot\widehat{\xi}(D_{/e/f})+yz\cdot\widehat{\xi}(D_{/e\dagger f})+z\cdot\widehat{\xi}(D_{\dagger e})\\
	&=t^2\cdot\widehat{\xi}(D_{-e-f})+ty\cdot\widehat{\xi}(D_{-e/f})+tz\cdot\widehat{\xi}(D_{\dagger f})\\
	&\hspace{10pt}+ty\cdot\widehat{\xi}(D_{-f/e})+y^2\cdot\widehat{\xi}(D_{/e/f})+yz\cdot\widehat{\xi}(D_{\dagger e\dagger f})+z\cdot\widehat{\xi}(D_{\dagger e}),
	\end{split}
	\end{align*}
	and first on $f$ then on $e$, we have
	\begin{align*}
	\begin{split}
	\widehat{\xi}(D)&=t\cdot\widehat{\xi}(D_{-f})+y\cdot\widehat{\xi}(D_{/f})+z\cdot\widehat{\xi}(D_{\dagger f})\\
	&=t^2\cdot\widehat{\xi}(D_{-f-e})+ty\cdot\widehat{\xi}(D_{-f/e})+tz\cdot\widehat{\xi}(D_{-f\dagger e})\\
	&\hspace{10pt}+ty\cdot\widehat{\xi}(D_{/f-e})+y^2\cdot\widehat{\xi}(D_{/f/e})+yz\cdot\widehat{\xi}(D_{/f\dagger e})+z\cdot\widehat{\xi}(D_{\dagger f})\\
	&=t^2\cdot\widehat{\xi}(D_{-e-f})+ty\cdot\widehat{\xi}(D_{-f/e})+tz\cdot\widehat{\xi}(D_{\dagger e})\\
	&\hspace{10pt}+ty\cdot\widehat{\xi}(D_{-e/f})+y^2\cdot\widehat{\xi}(D_{/e/f})+yz\cdot\widehat{\xi}(D_{\dagger e\dagger f})+z\cdot\widehat{\xi}(D_{\dagger f}).
	\end{split}
	\end{align*}
	They must coincide because of the well-definedness of $\widehat{\xi}(D)$. We have
	\[
	tz\cdot\widehat{\xi}(D_{\dagger f})+z\cdot\widehat{\xi}(D_{\dagger e})=tz\cdot\widehat{\xi}(D_{\dagger e})+z\cdot\widehat{\xi}(D_{\dagger f}),
	\]
	that is,
	\[
	(t-1)z\cdot\widehat{\xi}(D_{\dagger e})=(t-1)z\cdot\widehat{\xi}(D_{\dagger f}),
	\]
	which leads to $t=1$ or $z=0$ or $\widehat{\xi}(D_{\dagger e})=\widehat{\xi}(D_{\dagger f})$.\\
	Consider the latter case. Let $D$ be a digraph and $v$ an arbitrary vertex of $D$. Let $D'$ be the digraph obtained from $D$ by adding two vertices $u,w\notin V(D)$ and two arcs $e=(v,u),f=(u,w)$ to $D$. Applying extraction on $e$ and $f$, we have $D'_{\dagger e}=D_{-v}\cup K_1$ and $D'_{\dagger f}=D$. Since $\widehat{\xi}(D'_{\dagger e})=\widehat{\xi}(D'_{\dagger f})$, we have $\widehat{\xi}(D_{-v}\cup E_1)=\widehat{\xi}(D)$ for any vertices $v\in V(D)$. Applying this on every vertex of $D$, we get a trivial polynomial $\widehat{\xi}(D)=\widehat{\xi}(E_{|V(D)|})=x^{|V(D)|}$. This is a evaluation of $\widehat{\xi}(D)$ at $t=1$, $y=z=0$. That is, the third case is contained in the first case.\\
	Consider now the second case $\widehat{\xi}(D;t,x,y,0)$ and two arcs $e=(u,v)$, $f=(w,v)$ in $E(D)$, where $u$, $v$ and $w$ are different. Applying decomposition first on $e$ then on $f$ we get
	\begin{align*}
	\begin{split}
	\widehat{\xi}(D;t,x,y,0)&=t\cdot\widehat{\xi}(D_{-e};t,x,y,0)+y\cdot\widehat{\xi}(D_{/e};t,x,y,0)\\
	&=t^2\cdot\widehat{\xi}(D_{-e-f};t,x,y,0)+ty\cdot\widehat{\xi}(D_{-e/f};t,x,y,0)+y\cdot\widehat{\xi}(D_{/e};t,x,y,0)\\
	&=t^2\cdot\widehat{\xi}(D_{-e-f};t,x,y,0)+ty\cdot\widehat{\xi}(D_{/f};t,x,y,0)+y\cdot\widehat{\xi}(D_{/e};t,x,y,0)
	\end{split}
	\end{align*} 
	Applying decomposition first on $f$ then on $e$, we get
	\begin{align*}
	\begin{split}
	\widehat{\xi}(D;t,x,y,0)&=t\cdot\widehat{\xi}(D_{-f};t,x,y,0)+y\cdot\widehat{\xi}(D_{/f};t,x,y,0)\\
	&=t^2\cdot\widehat{\xi}(D_{-f-e};t,x,y,0)+ty\cdot\widehat{\xi}(D_{-f/e};t,x,y,0)+y\cdot\widehat{\xi}(D_{/f};t,x,y,0)\\
	&=t^2\cdot\widehat{\xi}(D_{-e-f};t,x,y,0)+ty\cdot\widehat{\xi}(D_{/e};t,x,y,0)+y\cdot\widehat{\xi}(D_{/f};t,x,y,0).
	\end{split}
	\end{align*} 
	From the coincidence of two results we have
	\[
	(t-1)y\cdot\widehat{\xi}(D_{/e};t,x,y,0)=(t-1)y\cdot\widehat{\xi}(D_{/f};t,x,y,0).
	\]
	The well-definedness implies that $t=1$ or $y=0$ or $\widehat{\xi}(D_{/e})=\widehat{\xi}(D_{/f})$. If $y=0$, then $\widehat{\xi}(D)=t\cdot\widehat{\xi}(D_{-e})$ and $\widehat{\xi}(E_n)=x^n$, which yields immediately that $\widehat{\xi}(D)=t^{|E(D)|}x^{|V(D)|}$. If $\widehat{\xi}(D_{/e})=\widehat{\xi}(D_{/f})$, given any digraph $D$ and let $v$ be any vertex of $D$. Let $D'$ be the digraph obtained from $D$ by adding two vertices $u,w\notin V(D)$ and two arcs $e=(v,u)$, $f=(w,u)$ to $D$. Applying the contraction on $e$ and $f$, we have $D'/f=D\cup E_1$ and $D'/f=D_{-E^+(v)}\cup E_1$. $\widehat{\xi}(D'_{/e})=\widehat{\xi}(D'_{/f})$ implies
	\[
	\widehat{\xi}(D\cup E_1)=\widehat{\xi}(D_{-E^+(v)}\cup E_1).
	\]
	From the definition of $\widehat{\xi}(D)$ we have
	\[
	\widehat{\xi}(D)=\widehat{\xi}(D_{-E^+(v)})
	\]
	for any digraph $D$ and any vertex $v$ in $D$. Applying $D_{-E^+(v)}$ on every vertex of $D$, we have $\widehat{\xi}(D)=\widehat{\xi}(E_{|V(D)|})=x^{|V(D)|}$. In this case, it is the trivial polynomial $\widehat{\xi}(D;1,x,0,0)=x^{|V(D)|}$.\\
	So far, we proved that the necessary condition is $t=1$ or $y=z=0$. The well-definedness in case $y=z=0$ is ensured by the explicit formula $\widehat{\xi}(D)=t^{|E(D)|}x^{|V(D)|}$. Consider the case $t=1$. We denote this possible polynomial by the notation of edge elimination polynomial:
	\[\xi(D;x,y,z):=\widehat{\xi}(D;1,x,y,z).\]
	Then we should prove the well-definedness of $\xi(D;x,y,z)$, that is, the result is independent of the order of decomposition steps.\\
	The distributivity of multiplication implies that elimination of an arc is exchangeable with decomposition of disjoint union. Hence, we can assume that the disjoint union decomposition steps are applied only on empty graphs, and only consider the order of decomposition of arcs.\\
	We shall consider only the linear order over arcs rather than decomposition steps. Such an order uniquely determines the decomposition process, if by convention, we just skip the steps of removing arcs that have been already removed by the proceeding steps. It is enough to show that successively decomposed arcs can be swapped. For two arcs $e,f\in E(D)$ there are 11 possible cases as shown in Figure~\ref{fig3.4}.
	\begin{figure}\label{fig3.4}
		\includegraphics[width=350px]{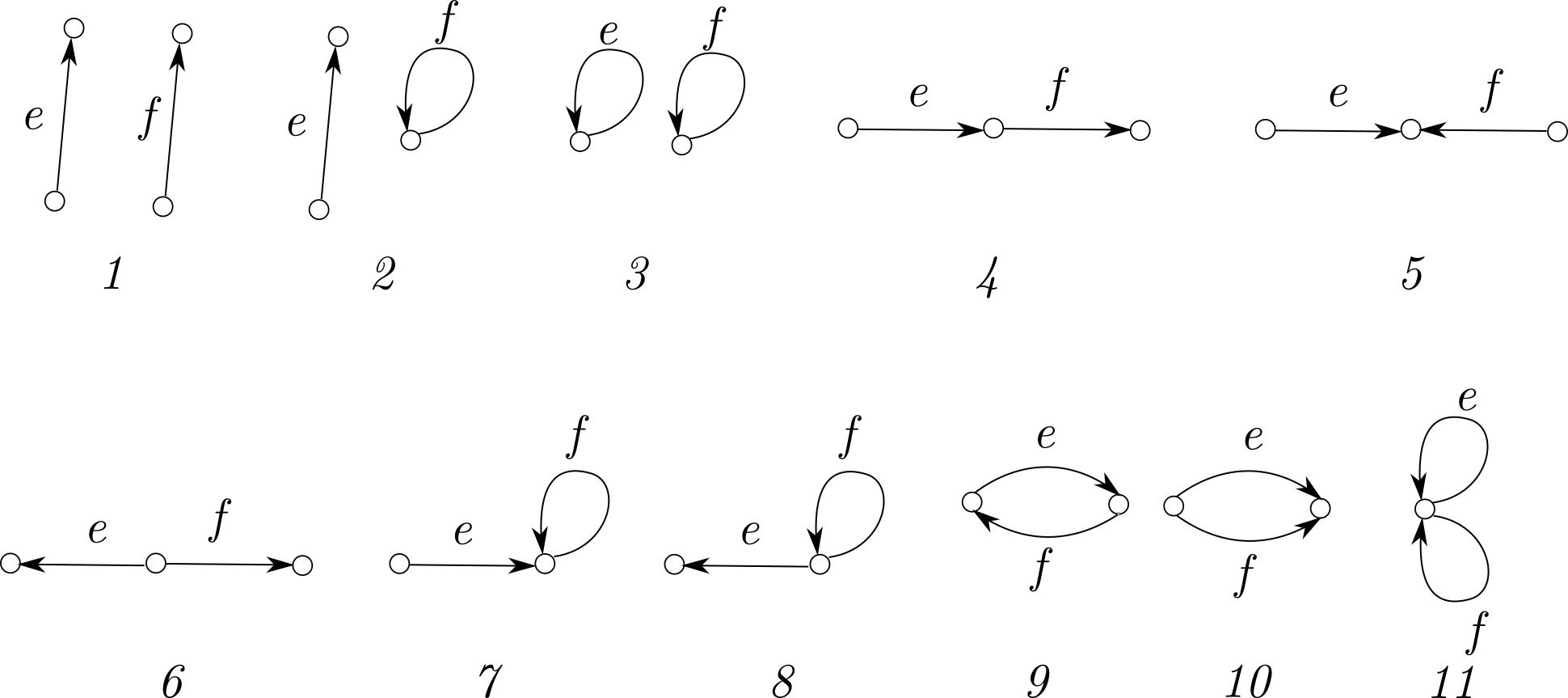}
		\caption{11 cases for two arcs $e$ and $f$}		
	\end{figure}
	In the case 1-3, the arc elimination operations are independent and hence commutative. In case 4 and case 5 the exchangeablility of elimination order of $e$ and $f$ are already showed. The case 6 is the same as case 5. In the case 7 and 8 we decompose first on $e$ then on $f$ and have
	\begin{align*}
	\begin{split}
	\xi(D;x,y,z)&=\xi(D_{-e};x,y,z)+y\cdot\xi(D_{/e};x,y,z)+z\cdot\xi(D_{\dagger e};x,y,z)\\
	&=\xi(D_{-e-f};x,y,z)+y\cdot\xi(D_{-e/f};x,y,z)+z\cdot\xi(D_{-e\dagger f};x,y,z)\\
	&\hspace{10pt}+y\cdot\xi(D_{/e};x,y,z)+z\cdot\xi(D_{\dagger e};x,y,z)\\
	&=\xi(D_{-e-f};x,y,z)+(y+z)\cdot\xi(D_{\dagger f};x,y,z)+y\cdot\xi(D_{/e};x,y,z)\\
	&\hspace{10pt}+z\cdot\xi(D_{\dagger e};x,y,z).
	\end{split}
	\end{align*}
	Applying decomposition first on $f$ then on $e$, we get
	\begin{align*}
	\begin{split}
	\xi(D;x,y,z)&=\xi(D_{-f};x,y,z)+y\cdot\xi(D_{/f};x,y,z)+z\cdot\xi(D_{\dagger f};x,y,z)\\
	&=\xi(D_{-f-e};x,y,z)+y\cdot\xi(D_{-f/e};x,y,z)+z\cdot\xi(D_{-f\dagger e};x,y,z)\\
	&\hspace{10pt}+y\cdot\xi(D_{/f};x,y,z)+z\cdot\xi(D_{\dagger f};x,y,z)\\
	&=\xi(D_{-e-f};x,y,z)+y\cdot\xi(D_{/e};x,y,z)\\
	&\hspace{10pt}+z\cdot\xi(D_{\dagger e};x,y,z)+(y+z)\cdot\xi(D_{\dagger f};x,y,z).
	\end{split}
	\end{align*}
	These two expressions are equal.\\
	We check the case 9 similarly:
	\begin{align*}
	\begin{split}
	\xi(D;x,y,z)&=\xi(D_{-e};x,y,z)+y\cdot\xi(D_{/e};x,y,z)+z\cdot\xi(D_{\dagger e};x,y,z)\\
	&=\xi(D_{-e-f};x,y,z)+y\cdot\xi(D_{-e/f};x,y,z)+z\cdot\xi(D_{-e\dagger f};x,y,z)\\
	&\hspace{10pt}+y\cdot\xi(D_{/e-f};x,y,z)+y^2\cdot\xi(D_{/e/f};x,y,z)+yz\cdot\xi(D_{/e\dagger f};x,y,z)\\
	&\hspace{10pt}+z\cdot\xi(D_{\dagger e};x,y,z)\\
	&=\xi(D_{-e-f};x,y,z)+y\cdot\xi(D_{-e/f};x,y,z)+y\cdot\xi(D_{-f/e};x,y,z)\\
	&\hspace{10pt}+(y^2+yz+2z)\cdot\xi(D_{\dagger e};x,y,z),
	\end{split}
	\end{align*}
	and
	\begin{align*}
	\begin{split}
	\xi(D;x,y,z)&=\xi(D_{-f};x,y,z)+y\cdot\xi(D_{/f};x,y,z)+z\cdot\xi(D_{\dagger f};x,y,z)\\
	&=\xi(D_{-f-e};x,y,z)+y\cdot\xi(D_{-f/e};x,y,z)+z\cdot\xi(D_{-f\dagger e};x,y,z)\\
	&\hspace{10pt}+y\cdot\xi(D_{/f-e};x,y,z)+y^2\cdot\xi(D_{/f/e};x,y,z)+yz\cdot\xi(D_{/f\dagger e};x,y,z)\\
	&\hspace{10pt}+z\cdot\xi(D_{\dagger f};x,y,z)\\
	&=\xi(D_{-e-f};x,y,z)+y\cdot\xi(D_{-e/f};x,y,z)+y\cdot\xi(D_{-f/e};x,y,z)\\
	&\hspace{10pt}+(y^2+yz+2z)\cdot\xi(D_{\dagger e};x,y,z),
	\end{split}
	\end{align*}
	we have the same result.\\
	In the case 10 and 11, the arc elimination steps are symmetric in their transformations of $D$ with respect to the order among $e$ and $f$. We have analyzed all of the cases and these complete the proof.
\end{proof}

\begin{definition}
	The \emph{arc elimination polynomial} of a digraph $D$ is defined recursively as follows:
	\begin{align*}
	\begin{split}
	&\xi(D;x,y,z)=\xi(D_{-e};x,y,z)+y\cdot\xi(G_{/e};x,y,z)+z\cdot\xi(G_{\dagger e};x,y,z)\hspace{11pt}\forall e\in E(D),\\
	&\xi(G_1\cup G_2;x,y,z)=\xi(G_1;x,y,z)\cdot\xi(G_2;x,y,z),\\
	&\xi(E_1;x,y,z)=x,\\
	&\xi(E_0;x,y,z)=1.
	\end{split}
	\end{align*}
\end{definition}

The recurrence relation of the trivariate cycle-path polynomial contains a case distinction. Motivated by the co-reduction of the Tutte polynomial 
\begin{align*}
\begin{split}
T(G;x,y)&=\sum_{F\subseteq E(G)}(x-1)^{k(G\langle F\rangle )-k(G)}(y-1)^{|F|+k(G\langle F\rangle )-|V(G)|}\\&=
\begin{cases}
1 & \textrm{if $G$ has no edges,}\\
xT(G_{-e};x,y) & \textrm{if $e$ is a bridge,}\\
yT(G_{/e};x,y) & \textrm{if $e$ is a loop,}\\
T(G_{-e};x,y)+T(G_{/e};x,y) & \textrm{otherwise}
\end{cases}
\end{split}
\end{align*}
and the dichromatic polynomial
\[
Z(G;q,v)=\sum_{F\subseteq E(G)}q^{k(G\langle F\rangle )}v^{|F|}=
\begin{cases}
q^{|V(G)|} & \textrm{if $G$ has no edges,}\\
Z(G_{-e};q,v)+vZ(G_{/e};q,v) & \textrm{for an edge $e$}
\end{cases}
\]
by
\begin{align*}
\begin{split}
&T(G;x,y)=(x-1)^{-k(G)}(y-1)^{-|V(G)|}Z(G;(x-1)(y-1),y-1),\\
&Z(G;q,v)=\left(\frac{q}{v}\right)^{k(G)}v^{|V(G)|}T(G;\frac{q}{v}+1,v+1),
\end{split}
\end{align*}
we pose a question: can we introduce a variable for the initial condition in order to avoid the case distinction, that is, can $\xi(D;x,y,z)$ be determined by $\widehat{\sigma\pi}(D;x,y,z)$ and vice versa? The answer is positive, since the number of vertices after the decomposition contains information about how many arc extraction operations are applied on the loops.

\begin{theorem}
	The arc elimination polynomial $\xi(D;x,y,z)$ and the trivariate cycle-path polynomial $\widehat{\sigma\pi}(D;x,y,z)$ are co-reducible via
	\[
	\widehat{\sigma\pi}(D;x,y,z)=\left(\frac{y-1}{z-1}\right)^{|V(D)|}\xi\left(D;\frac{y-1}{z-1},x\frac{y-1}{z-1},x\frac{(y-1)^2}{z-1}\right)
	\]
	and
	\[
	\xi(D;x,y,z)=x^{|V(D)|}\widehat{\sigma\pi}\left(D;\frac{y}{x},\frac{y+z}{y},\frac{z}{xy}+1\right).
	\]
\end{theorem}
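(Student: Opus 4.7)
The plan is to prove both identities by induction on $|E(D)|$, using the fact that each polynomial is uniquely determined by its value on edgeless digraphs together with its arc recurrence. For either direction it therefore suffices to check (i) that the initial condition on isolated vertices matches after the stated substitution, and (ii) that the arc recurrence of one side transforms into the arc recurrence of the other. Multiplicativity over disjoint union is immediate on both sides and does not enter the induction.

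For the second identity, set $a=y/x$, $b=(y+z)/y$, $c=z/(xy)+1$ and define $H(D):=x^{|V(D)|}\widehat{\sigma\pi}(D;a,b,c)$. Here $H(E_n)=x^n=\xi(E_n;x,y,z)$ is immediate. I would then apply the two-case arc recurrence of $\widehat{\sigma\pi}$ and pull the prefactor $x^{|V|}$ through using the vertex-count changes. In the loop case, $D_{/e}=D_{\dagger e}$ and $|V|$ drops by one under both operations, so the computation yields $H(D)=H(D_{-e})+xab\,H(D_{/e})$. This must match the $\xi$-recurrence at a loop, which, after collapsing its extraction and contraction terms via $D_{/e}=D_{\dagger e}$, reads $\xi(D)=\xi(D_{-e})+(y+z)\xi(D_{/e})$; the required identity $xab=y+z$ holds by choice of $a,b$. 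In the non-loop case, $|V|$ drops by one under $/e$ and by two under $\dagger e$, so the induction step needs $xa=y$ and $x^{2}a(c-1)=z$, which is exactly the system solved by $a=y/x$ and $c=1+z/(xy)$.

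For the first identity, set $X=(y-1)/(z-1)$, $Y=xX$, $Z=x(y-1)X$ and let $F(D)$ denote the asserted right-hand side. I would again verify the edgeless initial condition and then translate the uniform recurrence $\xi(D)=\xi(D_{-e})+Y\xi(D_{/e})+Z\xi(D_{\dagger e})$ into the two-case recurrence of $\widehat{\sigma\pi}$ by tracking the prefactor against the vertex-count changes. In the loop case, combining the $Y$- and $Z$-terms via $D_{/e}=D_{\dagger e}$ must reproduce the coefficient $xy$ on $F(D_{/e})$; this reduces to an algebraic identity relating $X,Y,Z$ to $x,y$ which is immediate from the definitions. In the non-loop case, the different drops in vertex count ($1$ versus $2$) must produce coefficients on $F(D_{/e})$ and $F(D_{\dagger e})$ that simplify to $x$ and $x(z-1)$ respectively, amounting to $Y/X=x$ and $Z/X^{2}=x(z-1)$, again clear from the choice of $Y$ and $Z$.

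The only real obstacle is clerical: one must meticulously track the vertex-count changes under the three arc operations, especially the asymmetry between a loop (where $/e$ and $\dagger e$ coincide and $|V|$ decreases by $1$) and a non-loop (where they differ and $|V|$ decreases by $1$ and $2$ respectively). Once this bookkeeping is done correctly, all the required algebraic identities among $x,y,z,X,Y,Z,a,b,c$ are routine, and the induction closes. No deeper structural insight is needed beyond the uniqueness of $\xi$ guaranteed by Theorem~\ref{thm4.1} and the recurrence characterization of $\widehat{\sigma\pi}$ from its spanning-subgraph definition.
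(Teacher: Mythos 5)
Your overall strategy --- prove each identity by induction on $|E(D)|$, checking the edgeless initial condition and showing that the substituted polynomial satisfies the defining recurrence of the other side, then invoking uniqueness --- is a genuinely different route from the paper's. The paper instead fixes a complete decomposition of $D$ into a multiset $M$ of empty digraphs, counts for each $m\in M$ the numbers of contraction and extraction steps applied to loops and to non-loop arcs, uniformizes the loop case of the $\widehat{\sigma\pi}$-recurrence with a free parameter $\alpha$ (using $D_{/e}=D_{\dagger e}$ for loops and choosing $\alpha=xy-x$), and then obtains both substitutions at once by ``equating exponents'' in a termwise identity. Your induction is more standard and avoids that somewhat informal exponent-matching step, at the price of handling the two directions separately; note that your loop-case step uses exactly the same key observation as the paper, namely that $D_{/e}=D_{\dagger e}$ collapses the $\xi$-recurrence at a loop to $\xi(D)=\xi(D_{-e})+(y+z)\xi(D_{/e})$. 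For the second identity your verifications $H(E_n)=x^n$, $xab=y+z$, $xa=y$ and $x^{2}a(c-1)=z$ are all correct, and that half of the proposal is sound.

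The first identity is where there is a genuine gap: the edgeless check you declared routine in fact fails for the statement as written, and your bookkeeping is internally inconsistent with that statement. With $F(D)=X^{|V(D)|}\xi(D;X,Y,Z)$ and $X=(y-1)/(z-1)$ one gets $F(E_n)=X^{n}\cdot X^{n}=X^{2n}\neq 1=\widehat{\sigma\pi}(E_n)$; moreover, pulling the prefactor $X^{|V(D)|}$ through the recurrence produces coefficients $YX$ and $ZX^{2}$ on the contraction and extraction terms, not the ratios $Y/X$ and $Z/X^{2}$ that you (correctly) list as the required conditions. Your identities $Y/X=x$, $Z/X^{2}=x(z-1)$ and $(Y+Z)/X=xy$ are precisely those for the \emph{reciprocal} prefactor $F(D)=X^{-|V(D)|}\xi(D;X,Y,Z)$ --- and indeed the theorem statement contains a sign error in the exponent: the paper's own proof derives $\xi(D;q,v,w)=q^{|V(D)|}\widehat{\sigma\pi}(D;x,y,z)$, which gives $\widehat{\sigma\pi}(D;x,y,z)=\left(\frac{z-1}{y-1}\right)^{|V(D)|}\xi\left(D;\frac{y-1}{z-1},x\frac{y-1}{z-1},x\frac{(y-1)^2}{z-1}\right)$, as a check on $D=E_1$ (left side $1$, stated right side $\left(\frac{y-1}{z-1}\right)^{2}$) confirms. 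So as written your induction for the first identity cannot close, because the initial condition you assert is false; your recurrence computations actually establish the corrected identity. The repair is either to fix the prefactor as above, or to dispense with the separate induction and deduce the first identity from your (correct) second one by inverting the substitution, checking that $q=\frac{y-1}{z-1}$, $v=xq$, $w=x(y-1)q$ inverts $x=v/q$, $y=(v+w)/v$, $z=w/(qv)+1$.
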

\begin{proof}
	We consider only the arc elimination of a digraph $D$ into empty graphs (at last the disjoint union decomposition may be applied). The result $M$ is a multiset of empty digraphs over $\{E_0,\ldots,E_{|V(D)|}\}$. Since $\xi(D;x,y,z)$ and $\widehat{\sigma\pi}(D;x,y,z)$ are well-defined, the multiset $M$ is independent of the order of arc decomposition. Choose a fixed but arbitrary order of decomposition of $D$ on the arcs into the multiset of empty digraphs $M$. For each $m\in M$, we denote the number of contraction steps on the loops resulting $m$ in this decomposition by $a_l(m)$. Similarly, we denote the number of contraction steps on the non-loop arcs, the number of extraction steps on the loops and non-loop arcs resulting $m$ by $a_2(m)$, $b_1(m)$ and $b_2(m)$, respectively. \\
	Then from the recurrence relation
	\[
	\xi(D;x,y,z)=\xi(D_{-e};x,y,z)+y\cdot\xi(G_{/e};x,y,z)+z\cdot\xi(G_{\dagger e};x,y,z)
	\]
	we have the following expression of $\xi(D;q,v,w)$:
	\[
	\xi(D;q,v,w)=\sum_{m\in M}q^{|V(m)|}v^{a_1(m)+a_2(m)}w^{b_1(m)+b_2(m)}.
	\]
	Since the arc deletion operation has no influence on the vertices, the arc contraction and loop extraction remove one vertex and extraction of a non-loop arc removes two vertices, we have $|V(m)|=|V(D)|-a_1(m)-a_2(m)-b_1(m)-2b_2(m)$ and hence
	\[
	\xi(D;q,v,w)=q^{|V(D)|}\sum_{m\in M}q^{-a_1(m)-a_2(m)-b_1(m)-2b_2(m)}v^{a_1(m)+a_2(m)}w^{b_1(m)+b_2(m)}.
	\]
	Recall that the recurrence relation of $\widehat{\sigma\pi}(D;x,y,z)$ is
	\begin{align*}
		\begin{split}
		&\hspace{10pt}\widehat{\sigma\pi}(D;x,y,z)\\
		&=
		\begin{cases}
		\widehat{\sigma\pi}(D_{-e};x,y,z)+xy\widehat{\sigma\pi}(D_{/e};x,y,z) & \textrm{if $e$ is a loop,}\\
		\widehat{\sigma\pi}(D_{-e};x,y,z)+x\widehat{\sigma\pi}(D_{/e};x,y,z)+x(z-1)\widehat{\sigma\pi}(D_{\dag e};x,y,z) & \textrm{otherwise.}
		\end{cases}
		\end{split}
	\end{align*}

	Since $D_{/e}=D_{\dagger e}$ if $e$ is a loop, we may say
	\[
	\widehat{\sigma\pi}(D;x,y,z)=\widehat{\sigma\pi}(D_{-e};x,y,z)+(xy-\alpha)\widehat{\sigma\pi}(D_{/e};x,y,z)+\alpha\widehat{\sigma\pi}(D_{\dag e};x,y,z),
	\]
	if $e$ is a loop, where $\alpha$ can be chosen arbitrarily. Then we have the following expressions of $\widehat{\sigma\pi}(D)$:
	\[
	\widehat{\sigma\pi}(D;x,y,z)=\sum_{m\in M}(xy-\alpha)^{a_1(m)}\alpha^{b_1(m)}x^{a_2(m)}(x(z-1))^{b_2(m)}.
	\]
	Setting $\alpha=xy-x$, we get
	\[
	\widehat{\sigma\pi}(D;x,y,z)=\sum_{m\in M}(xy-x)^{b_1(m)}x^{a_1(m)+a_2(m)}(x(z-1))^{b_2(m)}.
	\]
	Equation $\xi(D;q,v,w)=q^{|V(D)|}\widehat{\sigma\pi}(D;x,y,z)$ holds, if
	\begin{multline*}
		q^{-a_1(m)-a_2(m)-b_1(m)-2b_2(m)}v^{a_1(m)+a_2(m)}w^{b_1(m)+b_2(m)}\\=(xy-x)^{b_1(m)}x^{a_1(m)+a_2(m)}(x(z-1))^{b_2(m)},
	\end{multline*}
	that is,
	\begin{multline*}
		\left(\frac{v}{q}\right)^{a_1(m)+a_2(m)}\left(\frac{w}{q^2}\right)^{b_1(m)+b_2(m)}q^{b_1(m)}\\=x^{a_1(m)+a_2(m)}(x(z-1))^{b_1(m)+b_2(m)}\left(\frac{y-1}{z-1}\right)^{b_1(m)}.
	\end{multline*}
	
	Applying ``equating exponents", we conclude that 
	\[
	x=\frac{v}{q},\,\,\,y=\frac{v+w}{v},\,\,\, z=\frac{w}{qv}+1
	\]
	or
	\[q=\frac{y-1}{z-1},\,\,\,v=x\frac{y-1}{z-1},\,\,\,w=x\frac{(y-1)^2}{z-1},\]
	this completes the proof.
\end{proof}

We have now an interest in the combinatorial interpretation of the coefficients of $\xi(D;x,y,z)$. In the next theorem, an explicit expression of the arc elimination polynomial is given.
\begin{theorem}
	\[
	\xi(D;x,y,z)=\sum_{A,B}x^{k(D\langle A\cup B\rangle)-c(D\langle B\rangle )-c_1(D\langle A\rangle)}y^{|A|+|B|-c(D\langle B\rangle )}z^{c(D\langle B\rangle)},
	\]
	where the sum is over all subsets $A,B\subseteq E(D)$ of $E(D)$ such that
	\begin{enumerate}\setlength{\itemsep}{0pt}
		\item $A\cap B=\emptyset$,
		\item there is no vertex such that an arc in $A$ and an arc in $B$ are incident to it, and
		\item each component of the spanning subgraph $D\langle A\cup B\rangle$ is either a cycle or a path or an isolated vertex.
	\end{enumerate}
	Here $k(D)$ denotes the number of components of $D$, $c(D)$ denotes the number of covered components of $D$, that is, components of $D$ which are not isolated vertices, and $c_1(D)$ denotes the number of cycles of length 1 (loops) in $D$.
\end{theorem}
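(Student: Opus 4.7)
The plan is to argue by induction on $|E(D)|$. The base case $E(D) = \emptyset$ is immediate: only $(A,B) = (\emptyset,\emptyset)$ satisfies the three side conditions, the spanning subgraph consists of $|V(D)|$ isolated vertices, so the sum collapses to $x^{|V(D)|}$, matching $\xi(E_{|V(D)|}) = x^{|V(D)|}$.

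For the inductive step, fix an arc $e = (u,v) \in E(D)$ and write $F(D)$ for the right-hand side of the claimed identity. By the inductive hypothesis $F = \xi$ on all digraphs with strictly fewer arcs, so by the defining recurrence of $\xi$ it suffices to prove
\[
F(D) \;=\; F(D_{-e}) \;+\; y\,F(D_{/e}) \;+\; z\,F(D_{\dagger e}).
\]
I would partition the pairs $(A,B)$ appearing in $F(D)$ into three classes: (I) $e \notin A \cup B$; (II) either $e \in A$, or $e \in B$ and the $B$-component of $e$ in $D\langle A \cup B\rangle$ contains at least one arc besides $e$; (III) $e \in B$ and the $B$-component of $e$ is exactly $\{e\}$ (so $e$ is either a loop or a non-loop arc with no other $B$-arc incident to $u$ or $v$). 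These three classes will be shown to contribute $F(D_{-e})$, $y\,F(D_{/e})$, and $z\,F(D_{\dagger e})$ respectively.

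Class (I) is immediate, since the three side conditions and all three exponents depend only on the spanning subgraph $D\langle A \cup B\rangle$, which is unchanged when passing to $D_{-e}$. Class (III) is routine: the assignment $(A,B) \mapsto (A, B \setminus \{e\})$ is a bijection onto the valid pairs of $D_{\dagger e}$, because condition 2 combined with the $B$-component of $e$ being exactly $\{e\}$ forces every arc of $A \cup (B\setminus\{e\})$ to avoid $u$ and $v$; the removed $e$-component contributes exactly one to each of $k(D\langle A\cup B\rangle)$ and $c(D\langle B\rangle)$ and nothing to $c_1(D\langle A\rangle)$, so the $x$- and $y$-exponents are preserved while the $z$-exponent drops by one.

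The main work, and the step I expect to be the main obstacle, is class (II). The bijection I would construct sends $(A,B)$ to $(A',B')$ in $D_{/e}$ with $A' = A \setminus \{e\}$, $B' = B$ when $e \in A$, and $A' = A$, $B' = B \setminus \{e\}$ when $e \in B$; arcs incident to $u$ or $v$ that survive the contraction are reinterpreted at the merged vertex $w$, and in particular a back-arc $(v,u) \in A \cup B$ collapses to a loop at $w$. The inverse is recovered from the local structure at $w$ in $D_{/e}\langle A' \cup B'\rangle$: whether $w$ is incident to an $A'$-arc, to a $B'$-arc, or to neither cleanly distinguishes the three sub-cases ``$e \in A$ with larger $A$-component'', ``$e \in B$ with larger $B$-component'', and ``$e \in A$ with $A$-component exactly $\{e\}$'' respectively. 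For the exponent matching one needs $k(D\langle A\cup B\rangle) = k(D_{/e}\langle A'\cup B'\rangle)$, since the $e$-component is merely shortened by identifying $u$ and $v$, and $c(D\langle B\rangle) = c(D_{/e}\langle B'\rangle)$; the delicate point is the preservation of $c_1(D\langle A\rangle)$, which hinges on the observation that a 2-cycle supported on $\{u,v\}$ in $A$ collapses to a single loop on $w$, so that cycle components of $A$ persist as cycle components of $A'$. Since exactly one of $|A|$ or $|B|$ drops by one while all other relevant quantities are invariant, the $y$-exponent decreases by exactly one, producing the factor $y$. Assembling the three partial sums then yields the recurrence and closes the induction.
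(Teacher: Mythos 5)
Your overall architecture is exactly the paper's proof: the same three-way split of the pairs $(A,B)$ according to whether $e\notin A\cup B$, whether $e$ is eliminated by contraction, or whether $e\in B$ is the sole arc of its $B$-component, and the same two bijections $(A,B)\mapsto(A\setminus\{e\},B\setminus\{e\})$ into $\mathcal{C}(D_{/e})$ and $(A,B)\mapsto(A,B\setminus\{e\})$ into $\mathcal{C}(D_{\dagger e})$, with the inverse of the first recovered from the local structure at the merged vertex $w$. Your classes (I) and (III), including the weight bookkeeping in (III), are handled correctly and as in the paper.

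The gap is in class (II), at precisely the point you yourself flagged as delicate, and your resolution of it is wrong as stated. When $e=(u,v)\in A$ and the anti-parallel arc $(v,u)$ also lies in $A$ (a $2$-cycle component of $D\langle A\rangle$), contraction turns $(v,u)$ into a loop at $w$, so $c_1(D_{/e}\langle A'\rangle)=c_1(D\langle A\rangle)+1$ while $k$ is unchanged: the statistic $c_1$, which by the theorem's definition counts \emph{loops}, is not preserved, even though the number of \emph{cycle components} is --- and the latter is what your parenthetical actually argues. Consequently the summand satisfies $f(D,(A,B))=xy\cdot f(D_{/e},(A',B'))$ rather than $y\cdot f(D_{/e},(A',B'))$, and the recurrence you need fails. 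Indeed the identity is false as literally printed: for the digraph with the two arcs $e=(u,v)$ and $f=(v,u)$, the pair $(A,B)=(\{e,f\},\emptyset)$ contributes $x^{1-0-0}y^2=xy^2$ to the right-hand sum, whereas the recurrence (cross-checked via the co-reduction to $\widehat{\sigma\pi}$) gives $\xi(D;x,y,z)=x^2+2xy+y^2+yz+2z$, i.e.\ coefficient $y^2$ without the $x$. The statement and the proof are both repaired by replacing $c_1(D\langle A\rangle)$ with the number of cycle components of $D\langle A\rangle$ of arbitrary length; with that statistic your observation that cycle components of $A$ persist under contraction is exactly the needed invariance (a cycle of length $\ell\geq 2$ shortens, a $2$-cycle becomes a loop, and a loop $e\in A$ --- where your blanket claim that $k$ is preserved in class (II) also fails, since loop contraction deletes the merged vertex --- loses one from $k$ and one from the cycle count simultaneously, keeping the $x$-exponent intact). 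You should know that the paper's own proof asserts $c_1(D_{/e}\langle A\rangle)=c_1(D\langle A\rangle)$ whenever $e\in A$ is not a loop, which fails on the same $2$-cycle configuration, so your proposal reproduces rather than introduces this error; but taken as a blind proof of the printed statement it does not close, and the offending step cannot be closed, since the claimed equality is false.
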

\begin{proof}
	Let $D=(V,E)$ be a (multi-)digraph. The set of pairs $(A,B)$ of arc subsets $A,B\subseteq E$ satisfying the three conditions in the theorem is denoted by $\mathcal{C}(D)$. Let $N(D)$ be defined explicitly as
	\[
	N(D;x,y,z):=\sum_{(A,B)\in \mathcal{C}(D)}x^{k(D\langle A\cup B\rangle)-c(D\langle B\rangle )-c_1(D\langle A\rangle)}y^{|A|+|B|-c(D\langle B\rangle )}z^{c(D\langle B\rangle)}.
	\]
	We may use the notation 
	\[f(D,(A,B)):=x^{k(D\langle A\cup B\rangle)-c(D\langle B\rangle )-c_1(D\langle A\rangle)}y^{|A|+|B|-c(D\langle B\rangle )}z^{c(D\langle B\rangle)},\] then $N(D;x,y,z):=\sum_{(A,B)\in \mathcal{C}(D)}f(D,(A,B))$.\\ 
	In order to proof $\xi(D;x,y,z)=N(D;x,y,z)$, we need to show that $N(D)$ satisfies
	\begin{align*}
	\begin{split}
	&N(D;x,y,z)=N(D_{-e};x,y,z)+y\cdot N(G_{/e};x,y,z)+z\cdot N(G_{\dagger e};x,y,z)\hspace{11pt}\forall e\in E,\\
	&N(E_n;x,y,z)=x^n.
	\end{split}
	\end{align*}
	For the empty digraph $E_n$, the only summand corresponds to $A=B=\emptyset$, and obviously $N(E_n;x,y,z)=x^n=\xi(E_n;x,y,z)$.\\
	Let $e\in E$ be an arbitrarily chosen arc. The summands can be divided into three disjoint cases:
	\begin{itemize}\setlength{\itemsep}{0pt}
		\item Case 1: $e\notin A\cup B$;
		\item Case 2: $e\in B$ and $e$ is the only arc of a component of $D\langle B\rangle$;
		\item Case 3: All the rest. That is, $e\in A$ or $e\in B$ but it is not the only arc of a component of $D\langle B\rangle$.
	\end{itemize}
	The sets of arc subset pairs $(A,B)\in \mathcal{C}(D)$ satisfying the conditions in case 1, 2 and 3 are denoted by $\mathcal{C}_1(D)$, $\mathcal{C}_2(D)$ and $\mathcal{C}_3(D)$, respectively. \\
	In the case 1, it is easily to seen that $\mathcal{C}_1(D)=\mathcal{C}(D_{-e})$. Then
	\[
	N_1(D):=\!\!\!\sum_{(A,B)\in \mathcal{C}_1(D)}\!\!\!x^{k(D\langle A\cup B\rangle)-c(D\langle B\rangle )-c_1(D\langle A\rangle)}y^{|A|+|B|-c(D\langle B\rangle )}z^{c(D\langle B\rangle)}=N(D_{-e}).
	\]
	In the case 2, $e\in B$ is the only arc of a component of $D\langle B\rangle$, because of the required condition, any arc incident to $e$ can not in $A$ or $B$. Thus we can define a bijection $\varphi: \mathcal{C}_2(D)\rightarrow \mathcal{C}(D_{\dagger e})$, $\varphi((A,B)):=(A,B\backslash\{e\})$. Now compare $D_{\dagger e}$ with $D$, we get
	\begin{align*}
	\begin{split}
	&|B\backslash\{e\}|=|B|-1,\\ 
	&k(D_{\dagger e}\langle A\cup B\backslash\{e\}\rangle)=k(D\langle A\cup B\rangle)-1, \textrm{ and}\\ 
	&c(D_{\dagger e}\langle B\backslash\{e\}\rangle ) =c(D\langle B\rangle )-1.
	\end{split}
	\end{align*}
	that is,
	\[
	f(D,(A,B))=z\cdot f(D_{\dagger e},\varphi((A,B))) \hspace{0.5cm}\forall (A,B)\in\mathcal{C}_2(D)
	\]
	and therefore,
	\begin{align*}
	\begin{split}
	N_2(D):=&\sum_{(A,B)\in \mathcal{C}_2(D)}x^{k(D\langle A\cup B\rangle)-c(D\langle B\rangle )-c_1(D\langle A\rangle)}y^{|A|+|B|-c(D\langle B\rangle )}z^{c(D\langle B\rangle)}\\
	=&\sum_{(A,B)\in \mathcal{C}_2(D)}f(D,(A,B))\\
	=&\,\,\,\,z\cdot\sum_{(A,B)\in \mathcal{C}_2(D)}f(D_{\dagger e},\varphi((A,B)))\\
	=&\,\,\,\,z\cdot\sum_{(A,B)\in \mathcal{C}(D_{\dagger e})}f(D_{\dagger e},(A,B))\\
	=&\,\,\,\,z\cdot N(D_{\dagger e}).
	\end{split}
	\end{align*}
	In the case 3, either $e\in A$ or $e\in B$ and $e$ is incident to other arcs in $B$. Since $e$ is either the only arc of a component of $D\langle A\rangle$, or belongs to a directed path or a directed cycle of length at least two, whose arcs are either all in $A$ or all in $B$, we can define a function $\psi: \mathcal{C}_3(D)\rightarrow\mathcal{C}(D_{/e})$, $\psi((A,B)):=(A\backslash\{e\},B\backslash\{e\})$. Evidently
	\[
	\psi^{-1}((A,B)):=
	\begin{cases}
	(A,B\cup\{e\}) & \textrm{if $e$ is incident to an arc of $B$,}\\
	(A\cup\{e\},B) & \textrm{otherwise}
	\end{cases}
	\]
	is the inverse function of $\psi$, and the well-definedness of $\psi^{-1}$ is guaranteed by the conditions of $(A,B)$, we conclude that $\psi$ is bijective. 
	Compare now $D_{/e}$ with $D$, we get
	\begin{align*}
	\begin{split}
	&|A\backslash\{e\}|+|B\backslash\{e\}|=|A|+|B|-1,\\ 
	&c(D_{/ e}\langle B\backslash\{e\}\rangle ) =c(D\langle B\rangle ),\\
	&k(D_{/ e}\langle A\cup B\backslash\{e\}\rangle)=
	\begin{cases}
	k(D\langle A\cup B\rangle)-1 & \textrm{if $e\in A$ is a loop,}\\
	k(D\langle A\cup B\rangle) &\textrm{otherwise,}
	\end{cases}\\
	&c_1(D_{/e}\langle A\rangle)=
	\begin{cases}
	c_1(D\langle A\rangle)-1 & \textrm{if $e\in A$ is a loop,}\\
	c_1(D\langle A\rangle) & \textrm{otherwise.}
	\end{cases}
	\end{split}
	\end{align*}
	Applying to the function $f$, we have
	\[
	f(D,(A,B))=y\cdot f(D_{/ e},\psi((A,B))) \hspace{0.5cm}\forall (A,B)\in\mathcal{C}_3(D).
	\]
	Therefore,
	\begin{align*}
	\begin{split}
	N_3(D):=&\sum_{(A,B)\in \mathcal{C}_3(D)}x^{k(D\langle A\cup B\rangle)-c(D\langle B\rangle )-c_1(D\langle A\rangle)}y^{|A|+|B|-c(D\langle B\rangle )}z^{c(D\langle B\rangle)}\\
	=&\sum_{(A,B)\in \mathcal{C}_3(D)}f(D,(A,B))\\
	=&\,\,\,\,y\cdot\sum_{(A,B)\in \mathcal{C}_3(D)}f(D_{\dagger e},\psi((A,B)))\\
	=&\,\,\,\,y\cdot\sum_{(A,B)\in \mathcal{C}(D_{/ e})}f(D_{/ e},(A,B))\\
	=&\,\,\,\,y\cdot N(D_{/ e}).
	\end{split}
	\end{align*}
	Summing up the three cases, we conclude that
	\[
	N(D)=N_1(D)+N_2(D)+N_3(D)=N(D_{-e})+y\cdot N(D_{/e})+z\cdot N(D_{\dagger e}).
	\]
	Together with $N(E_n)=x^n$ it implies $N(D)=\xi(D)$. This completes the proof.
\end{proof}
\section*{Acknowledgement}
The author is grateful to Professor Peter Tittmann for the motivation and comments on this paper.

\end{document}